\title{Spectral properties of the singular Friedrichs-Lee Hamiltonian}
\authors{Paolo Facchi,
Marilena Ligab\`o,
Davide Lonigro}
\address[paolo.facchi@ba.infn.it]{Paolo Facchi, \\ Dipartimento di Fisica and MECENAS, Universit\`a di Bari, I-70126 Bari, Italy\\
INFN, Sezione di Bari, I-70126 Bari, Italy}
\address[marilena.ligabo@uniba.it]{Marilena Ligab\`o, \\ Dipartimento di Matematica, Universit\`a di Bari, I-70125 Bari, Italy}
\address[davide.lonigro@ba.infn.it]{Davide Lonigro, \\ Dipartimento di Fisica and MECENAS, Universit\`a di Bari, I-70126 Bari, Italy\\
INFN, Sezione di Bari, I-70126 Bari, Italy}
\theoremstyle{plain}
\newtheorem{theorem}{Theorem}[section]
\newtheorem{proposition}[theorem]{Proposition}
\newtheorem{hypothesis}{Hypothesis}
\theoremstyle{definition}
\newtheorem{definition}[theorem]{Definition}
\theoremstyle{remark}
\newtheorem{remark} [theorem]{Remark}
\newtheorem{example}[theorem]{Example}
\newcommand{\R}{\mathbb{R}}
\newcommand{\CC}{\mathbb{C}}
\newcommand{\e}{\varepsilon}
\newcommand{\pvint}{\mathop{\mathrm PV}\!\int}
\begin{document}

\begin{abstract}
We show that the Friedrichs-Lee model, which describes the one-excitation sector of a two-level atom interacting with a structured boson field, can be generalized to singular atom-field couplings. We provide a characterisation of its spectrum and resonances and discuss the inverse spectral problem.
\end{abstract}

\begin{classification}
Primary: 81Q10;  	%Selfadjoint operator theory in quantum theory, including spectral analysis
Secondary: 
81Q80,  	%Special quantum systems, such as solvable systems
81Q15.   	%Perturbation theories for operators and differential equations
\end{classification}

\tableofcontents 

\section{Introduction}

The Friedrichs-Lee Hamiltonian is a self-adjoint operator on a Hilbert space which describes the behaviour of an eigenvalue coupled to a continuous spectrum, and it is a rare example of a solvable model with a rich mathematical structure~\cite{Horwitz}. It was originally introduced by T.\ D.\ Lee~\cite{Lee} as a solvable quantum-field theoretical model suitable for the investigation of the renormalisation procedure. Lee's Hamiltonian has a conserved quantum number labelling reducing subspaces (excitation sectors).  Its reduction to the first nontrivial excitation sector,  which we will refer to as the Friedrichs-Lee Hamiltonian, is the quantum-mechanical model used by Friedrichs in his seminal study of the perturbation of continuous spectra~\cite{Friedrichs}.

Since its inception, the Friedrichs-Lee Hamiltonian has proven to be a very useful model in many applications, ranging from quantum field theory of unstable particles~\cite{Araki} to non-relativistic quantum electrodynamics~\cite{Frohlich}, to quantum field theory on manifolds~\cite{Turgut},  to quantum optics~\cite{Gadella}, to quantum probability~\cite{vonWaldenfels}, to name a few.

In this paper we aim at a complete study of the mathematical properties of the Friedrichs-Lee operator by extending it to a larger class of possibly singular couplings (thus providing rigorous foundations to many formal computations usually carried out in the physical literature) and providing a characterisation of its spectrum with respect to the spectrum of the uncoupled operator. The paper is organised as follows:
\begin{itemize}
	\item in Section~\ref{sect:1} we derive the expression of the Friedrichs-Lee Hamiltonian as the restriction to the one-excitation sector of Lee's field-theory model;
	\item in Section~\ref{sect:2} we introduce the singular Friedrichs-Lee Hamiltonian, proving that it includes the case of a regular coupling (Theorem~\ref{thm:singmodel}) and showing that a singularly coupled model can always be  obtained as the norm resolvent limit of a proper sequence of regular models (Theorem~\ref{thm:singlimit});
	\item in Section~\ref{sect:3} we characterise the spectrum of the Friedrichs-Lee Hamiltonian: in Theorem~\ref{thm:spessential} we find its essential and discrete components, and in Theorem~\ref{thm:spe} we find its absolutely continuous, singular continuous and pure point components, the latter being strictly dependent on a Herglotz function known as the self-energy of the model;
	\item in Section~\ref{sect:4} we apply the results of the previous sections to some examples of Friedrichs-Lee Hamiltonians;
	\item in Section~\ref{sect:5} we discuss the resonances of the model, showing that they can be characterised as complex eigenvalues of a deformation of the Hamiltonian (Theorem~\ref{thm:res});
	\item in Section~\ref{sect:6} we introduce the inverse spectral problem, i.e.\  the choice of a coupling yielding the desired dynamics for a model with given field structure.
\end{itemize}
Future developments may include the generalisation of the singular coupling and spectral characterisation to the $n$-atom Friedrichs-Lee Hamiltonian or to higher excitation sectors, as well as applications to physically interesting systems.

\section{Physical model}\label{sect:1}
Let $(X,\mu)$ be a $\sigma$-finite measure space and $\mathcal{H}=L^2(X,\mu)$ the space of square-integrable complex-valued functions with respect to $\mu$, with $\braket{\cdot|\cdot}$ denoting the inner product, conjugate-linear in the first entry, and $\|\cdot\|$  the induced norm.  Let $H_{\textrm{field}}$ be a Hamiltonian operator on the Bose-Fock space $\mathcal{F}$ of $\mathcal{H}$ with formal expression 
\begin{equation}
H_{\textrm{field}} =\int_X\omega( k )\,a^*( k )\,a( k )\;\mathrm{d}\mu( k ),
\end{equation}
where $\omega: X \to \R$ is a measurable function, and $a( k )$, $a^*( k )$ are the operator-valued distributions associated with a family of annihilation and creation operators, satisfying the formal commutation relations $[a( k ),a^*( k ')]=\delta( k - k ')$. Physically, $ H_{\textrm{field}} $ is the operator associated with the energy of a bosonic field, $(X,\mu)$  is the momentum space of the bosons, and $\omega( k )$ is the dispersion relation, i.e.\  the energy of a quantum with momentum~$k$. For example the choices $X=\mathbb{R}^3$,  $\omega( k )=(| k |^2+m^2)^{1/2}$ and $\mu$ being the Lebesgue measure on $\R^3$, describe a relativistic bosonic field associated with particles of mass $m$. At the mathematical level, $H_{\textrm{field}} $ is the second quantisation of the multiplication operator associated with the function $\omega$, and is a densely defined self-adjoint operator on $\mathcal{F}$~\cite{Reed}. 

Let us consider a nondegenerate two-level atom, with ground state (in Dirac's notation) $\ket{\downarrow}$ and excited state $\ket{\uparrow}$ in $\mathbb{C}^2$. Let
\begin{equation}
H_{\textrm{atom}}= \varepsilon_{\textrm{a}} \ket{\uparrow}\bra{\uparrow},
\end{equation}
be its Hamiltonian, where $\varepsilon_{\textrm{a}}\in\R$ is the energy of the excited state, and the ground state energy is set to zero. The operator $H_{\textrm{atom}}\otimes I+I\otimes  H_{\textrm{field}} $, defined on a dense subspace of $\mathbb{C}^2\otimes\mathcal{F}$, represents the system atom-field in the absence of mutual interaction. A physically meaningful coupling between the atom and the field can be introduced as follows: given $g\in \mathcal{H}$, let
\begin{equation}
V_g=\int_X\left(\sigma^+\otimes \overline{g( k )}\,a( k )+\sigma^-\otimes g( k )\,a^*( k )\right)\,\mathrm{d}\mu( k ),
\end{equation}
where $\sigma^+=\ket{\uparrow}\bra{\downarrow}$ and $\sigma^-=\ket{\downarrow}\bra{\uparrow}$ are the ladder operators, that is, $\sigma^+$ raises the ground to the excited state and $\sigma^-$ lowers the excited to the ground state~\cite{Frohlich}. The total Hamiltonian $H_{\mathrm{Lee}}$  associated with the atom-field system is formally given by
\begin{equation}
H_{\mathrm{Lee}}=H_{\textrm{atom}}\otimes I+I\otimes  H_{\textrm{field}}+V_g,
\end{equation}
where we use the same notation for  identity operators acting on different Hilbert spaces.
This is a generalisation of the standard Lee model~\cite{Lee}. Physically, $\mu$ controls and weighs the values of momenta available to the bosons and must be chosen according to the physical setting: for instance, for an electromagnetic field in free space, $\mu$ is the Lebesgue measure on $X=\mathbb{R}^3$, while, for a field confined in an optical cavity, at least one component of the momenta will be discrete. 

Summing up, the analytic features of our model will depend on three physically important quantities:
\begin{itemize}
	\item The space $(X,\mu)$ of all possible momenta of the field quanta;
	\item The \emph{dispersion relation} $\omega( k )$ that gives the energy of a quantum with momentum $k$;
	\item The \emph{form factor} $g( k )$ that controls the coupling between the atom and a field quantum with momentum $k$.
\end{itemize}

The operator $H_{\mathrm{Lee}}$ does not conserve the total number of bosons in the theory: the number operator, formally defined as
\begin{equation}
N_{\textrm{field}}=\int_X a^*( k ) a( k )\;\mathrm{d}\mu( k ),
\end{equation}
does not commute with $H_{\mathrm{Lee}}$ for any nonzero form factor $g$. However, the operator
\begin{equation}
N_{\mathrm{tot}}=\ket{\uparrow}\bra{\uparrow}\otimes I+ I \otimes N_{\textrm{field}},
\end{equation}
representing the total number of excitation in the system,  commutes with $H_{\mathrm{Lee}}$ for every choice of $g$; since the operator $N_{\mathrm{tot}}$ has spectrum $\sigma(N_{\mathrm{tot}})=\mathbb{N}$, one can study the evolution of the system generated by the restriction of $H_{\mathrm{Lee}}$ to each eigenspace 
of $N_{\mathrm{tot}}$.

The simplest nontrivial choice is the one-excitation sector that is isomorphic to $ \mathbb{C}\oplus \mathcal{H}$.
The generic normalised element $\Psi\in \mathbb{C}\oplus \mathcal{H}$ may be expressed as
\begin{equation}
\Psi=\begin{pmatrix}
x\\\xi
\end{pmatrix},\qquad x \in \mathbb{C}, \quad \xi \in \mathcal{H},
\end{equation}
the normalisation being
\begin{equation}
|x|^2+\int_X|\xi(k)|^2\,\mathrm{d}\mu(k)=1,
\end{equation}
where $|x|^2$ is the probability that the atom is in its excited state and $\xi$ is the wave function of the boson in the field. In particular, the state 
\begin{equation}
\Psi_0=\begin{pmatrix}
1\\0
\end{pmatrix}
\label{vacuum}
\end{equation}
represents the excited atom interacting with the vacuum. 
The restriction of $H_{\mathrm{Lee}}$ to the one-excitation sector $ \mathbb{C}\oplus \mathcal{H}$
which we will denote as $H_{\mathrm{FL}}$, is the \emph{Friedrichs-Lee Hamiltonian}~\cite{Friedrichs,Horwitz}. Its domain is
\begin{equation}
	D\left(H_{\mathrm{FL}}\right)=\mathbb{C}\oplus D(\Omega)=\left\{\begin{pmatrix}
	x\\\xi
	\end{pmatrix}\,\bigg|\,x\in\mathbb{C},\,\xi\in D(\Omega)\right\},
	\label{domain0}
\end{equation}
and it acts on a generic vector of the domain as
\begin{equation}\label{eq:Omegagrestric}
H_{\mathrm{FL}}\begin{pmatrix}
x\\\xi
\end{pmatrix}= \begin{pmatrix}\varepsilon_{\textrm{a}}  x + \langle g| \xi \rangle\\ x g + \Omega \xi\end{pmatrix},
\end{equation}
where $\Omega$ is the multiplication operator associated with the dispersion relation $\omega$, that is, $(\Omega\xi)( k )=\omega( k )\xi( k )$; we will refer to $\Omega$ as the \emph{inner Hamiltonian} of the model. 
The action of the Hamiltonian  $H_{\mathrm{FL}}$ in~\eqref{eq:Omegagrestric} can be obtained using a formal matrix representation
\begin{equation}
H_{\mathrm{FL}} = \begin{pmatrix}
\varepsilon_{\textrm{a}} &\langle g| \\
g &\Omega
\end{pmatrix},
\label{intro}
\end{equation}
where $\langle g|$ is, in Dirac notation, the linear functional on $\mathcal{H}$ associated with~$g$.  

\section{The singular Friedrichs-Lee Hamiltonian}\label{sect:2}
The Hamiltonian~\eqref{eq:Omegagrestric} with matrix representation~\eqref{intro} cannot include a \emph{singular coupling} between field and atom, i.e.\  a form factor $g\notin \mathcal{H}$. This obstruction is relevant at a physical level: for instance, a flat form factor between field and atom (i.e.\  $g(k)=\mathrm{constant}$) cannot be generally included (if $\mu$ is not a finite measure), thus preventing the description of interesting phenomena (e.g. exponential decay of the state $\Psi_0$ in Eq. \eqref{vacuum}).

To extend the Friedrichs-Lee model to a (possibly) singular coupling between atom and field~\cite{FLProc}, the formalism of Hilbert scales will be extensively used (see e.g.~\cite{AlbeverioKurasov}). 
\begin{definition}
Given a Hilbert space $\mathcal{K}$ and a self-adjoint operator $T$ on it, the space $\mathcal{K}_s$, for any $s\geq0$, is the domain of $|T|^{s/2}$ endowed with the norm 
\begin{equation}
\|\varphi\|_{s}:=\||T-i|^{s/2}\varphi\|
\label{eq:norma-s}
\end{equation}
and $\mathcal{K}_{-s}$ is its dual space, i.e.\ the space of continuous functionals on it, endowed with the norm
$\|\varphi\|_{-s}:=\||T-i|^{-s/2}\varphi\|$.
\end{definition}
With an abuse of notation, the  pairing between $\varphi\in\mathcal{K}_{-s}$ and $\zeta \in\mathcal{K}_s$ will still be  denoted as $\braket{\varphi|\zeta}$, i.e.\ with the notation of the scalar product on $\mathcal{K}$. Besides, the following properties hold:
\begin{itemize}
	\item $\mathcal{K}_{s}\subset\mathcal{K}_{s'}$ for any $s>s'$, with dense inclusion;
	\item $\mathcal{K}_0$, $\mathcal{K}_1$ and $\mathcal{K}_2$ are, respectively, the original Hilbert space $\mathcal{K}$, the form domain of $T$, and the domain of $T$;
	\item $T$ maps $\mathcal{K}_s$ into $\mathcal{K}_{s-2}$ and $\frac{1}{T-i}$ maps $\mathcal{K}_s$ into $\mathcal{K}_{s+2}$.
\end{itemize}

In the following we will consider the Hilbert space $\mathcal{K}=\mathcal{H}$ 
and the self-adjoint operator $T=\Omega$ as the multiplication operator by a real-valued measurable function $\omega: X \to \R$.\footnote{There is no loss of generality in this choice since, by the spectral theorem, every self-adjoint operator is equivalent to a multiplication operator on some $L^2$ space.} In this case we have, for every $s\geq0$,
\begin{equation}
	g\in\mathcal{H}_s\qquad \text{iff}\qquad \int_X\left|\omega(k)-i\right|^s|g(k)|^2\,\mathrm{d}\mu(k)<\infty;
\end{equation}
\begin{equation}
	g\in\mathcal{H}_{-s}\qquad \text{iff}\qquad \int_X\frac{|g(k)|^2}{|\omega(k)-i|^s}\,\mathrm{d}\mu(k)<\infty.
\end{equation}
Our first result is the following theorem, which extends the Friedrichs-Lee Hamiltonian to the case $g \in \mathcal{H}_{-2}$.
\begin{theorem}\label{thm:singmodel}
	Let $\varepsilon\in\mathbb{R}$, $\Omega$ be a self-adjoint operator on the Hilbert space $\mathcal{H}=L^2(X,\mu)$ acting as the multiplication operator by a real-valued measurable function $\omega:X \to \R$, and $g\in\mathcal{H}_{-2}$. Consider the operator $H_{g,\varepsilon}$ on the Hilbert space $\CC \oplus \mathcal{H}$ with domain
	\begin{equation}
	\label{domain}
	D(H_{g,\varepsilon})=\left\{\begin{pmatrix}
	x\\\xi-x\frac{\Omega}{\Omega^2+1}g
	\end{pmatrix}\Bigg|\;x\in\mathbb{C},\;\xi\in\mathcal{H}_2\right\} ,
	\end{equation}
	such that
	\begin{equation}
	\label{action}
	H_{g,\varepsilon}\begin{pmatrix}
	x\\\xi-x\frac{\Omega}{\Omega^2+1}g
	\end{pmatrix}=\begin{pmatrix}
	\varepsilon x+\braket{g|\xi}\\
	\Omega\xi+x\frac{1}{\Omega^2+1}g
	\end{pmatrix}.
	\end{equation}
	Then we have:
	\begin{enumerate}
		\item[(i)] If $g\in\mathcal{H}$, then $H_{g,\varepsilon}$ reduces to the Friedrichs-Lee Hamiltonian $H_{\mathrm{FL}}$ in~(\ref{eq:Omegagrestric}) with atom excitation energy
		\begin{equation}
		\varepsilon_{\mathrm{a}}=\varepsilon+\left\langle g\,\bigg|\frac{\Omega}{\Omega^2+1}g\right\rangle.
		\label{shift}
		\end{equation}
		\item[(ii)] $H_{g,\varepsilon}$ is self-adjoint and, for all $z \in \CC \setminus \R$, its resolvent operator is
		\begin{equation}
		\label{resolvent00}
		\frac{1}{H_{g,\varepsilon}-z}\begin{pmatrix}
		x\\\xi
		\end{pmatrix}=\begin{pmatrix}
		\frac{x-\left\langle g\,\big|\frac{1}{\Omega-z}\xi\right\rangle}{\varepsilon-z-\Sigma_g(z)}\\\frac{1}{\Omega-z}\xi-\frac{x-\left\langle g\,\big|\frac{1}{\Omega-z}\xi\right\rangle}{\varepsilon-z-\Sigma_g(z)}\frac{1}{\Omega-z}g
		\end{pmatrix}, \qquad  \begin{pmatrix}
		x\\\xi
		\end{pmatrix} \in \CC \oplus \mathcal{H},
		\end{equation}
		where 
		\begin{equation}
		\Sigma_g(z):=\left\langle g\,\bigg|\left(\frac{1}{\Omega-z}-\frac{\Omega}{\Omega^2+1}\right)g\right\rangle\label{self}
		\end{equation}
		is the self-energy function of $H_{g,\varepsilon}$.
		\item[(iii)] The evolution group generated by $H_{g,\varepsilon}$ is given by
		\begin{equation}\label{eq:unitarygroup}
		U_{H_{g,\varepsilon}}(t)=\frac{1}{2\pi i}\pvint_{i \delta-\infty}^{i \delta+\infty}e^{-izt}\,\frac{1}{H_{g,\varepsilon}-z}\;\mathrm{d}z,
		\end{equation}
		for all  $t>0$, where $\delta >0$ is an arbitrary constant and the principal-value integral must be understood in the strong sense. Moreover, if 
		\begin{equation}\label{psiev}
		\begin{pmatrix}
		x(t)\\ \xi(t)
		\end{pmatrix}
		=U_{H_{g,\varepsilon}}(t) \Psi_0
		\end{equation}
		is the evolution at time $t>0$ of the initial state $\Psi_0$ in~(\ref{vacuum}), we have
		\begin{equation}\label{f0t}
		x(t)=\frac{1}{2\pi i}\pvint_{i \delta-\infty}^{i\delta+\infty}\frac{e^{-izt}}{ \varepsilon -z-\Sigma_g(z)}\;\mathrm{d}z,
		\end{equation}
		\begin{equation}\label{xit}
		\xi(t,k)=-\frac{1}{2\pi i}\pvint_{i \delta-\infty}^{i\delta+\infty}\frac{g(k)\, e^{-izt}}{ [\varepsilon -z-\Sigma_g(z)] [\omega(k)-z]} \;\mathrm{d}z.
		\end{equation}
	\end{enumerate}
\end{theorem}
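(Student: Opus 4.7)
The plan is to proceed in three steps, one per claim, after a preliminary check that every object appearing in the statement makes sense under the sole assumption $g\in\mathcal{H}_{-2}$. Since $|\omega-i|^{-2}=(\omega^2+1)^{-1}$, both $\frac{\Omega}{\Omega^2+1}g$ and $\frac{1}{\Omega^2+1}g$ lie in $\mathcal{H}$; the algebraic identity $\frac{1}{\omega-z}-\frac{\omega}{\omega^2+1}=\frac{1+z\omega}{(\omega-z)(\omega^2+1)}$ then shows that $\bigl(\frac{1}{\Omega-z}-\frac{\Omega}{\Omega^2+1}\bigr)g\in\mathcal{H}_2$ for every $z\in\mathbb{C}\setminus\mathbb{R}$, so the self-energy~(\ref{self}) is a well-defined dual pairing between $\mathcal{H}_{-2}$ and $\mathcal{H}_2$; finally $\frac{1}{\Omega-z}\xi\in\mathcal{H}_2$ for every $\xi\in\mathcal{H}$, making the bracket $\braket{g|\frac{1}{\Omega-z}\xi}$ in~(\ref{resolvent00}) legitimate.

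For claim~(i), I would set $\eta=\xi-x\frac{\Omega}{\Omega^2+1}g$: when $g\in\mathcal{H}$ the shift belongs to $\mathcal{H}_2$, so $\xi\in\mathcal{H}_2$ if and only if $\eta\in\mathcal{H}_2=D(\Omega)$, and the domain~(\ref{domain}) coincides with $\mathbb{C}\oplus D(\Omega)$. Substituting $\xi=\eta+x\frac{\Omega}{\Omega^2+1}g$ into~(\ref{action}) and using $\Omega\cdot\frac{\Omega}{\Omega^2+1}+\frac{1}{\Omega^2+1}=I$ on $D(\Omega)$ reproduces~(\ref{eq:Omegagrestric}) with the renormalised excitation energy~(\ref{shift}).

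For claim~(ii), I would define $R(z)$ by the right-hand side of~(\ref{resolvent00}), verify that its image has the form prescribed by~(\ref{domain}) with $\xi'=\frac{1}{\Omega-z}\xi-x'\bigl(\frac{1}{\Omega-z}-\frac{\Omega}{\Omega^2+1}\bigr)g\in\mathcal{H}_2$, and compute $(H_{g,\varepsilon}-z)R(z)$ using~(\ref{action}): the upper component collapses to $x$ by the definition of $x'$ and $\Sigma_g(z)$, while the lower component reduces to $\xi$ after using $\frac{\Omega}{\Omega-z}=I+\frac{z}{\Omega-z}$ and regrouping. The mirror computation yields $R(z)(H_{g,\varepsilon}-z)=I$ on $D(H_{g,\varepsilon})$, and since $R(z)$ is bounded on $\mathbb{C}\oplus\mathcal{H}$, this shows that $H_{g,\varepsilon}-z$ is a bijection onto $\mathbb{C}\oplus\mathcal{H}$ for every $z\in\mathbb{C}\setminus\mathbb{R}$. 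Coupled with the symmetry of $H_{g,\varepsilon}$ — which follows by a direct computation exploiting the self-adjointness of $\Omega$ and the cancellation of the cross terms between $\eta_i$ and $x_i\frac{\Omega}{\Omega^2+1}g$ — this delivers self-adjointness.

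For claim~(iii), formula~(\ref{eq:unitarygroup}) is the Cauchy/Laplace inversion of the resolvent of a self-adjoint operator: by the spectral theorem it reduces to the scalar identity $e^{-i\lambda t}=\frac{1}{2\pi i}\pvint_{i\delta-\infty}^{i\delta+\infty}\frac{e^{-izt}}{\lambda-z}\,\mathrm{d}z$, valid for $t>0$ by closing the contour in the lower half-plane and picking up the pole at $z=\lambda$. Substituting $\Psi_0=(1,0)$ into~(\ref{resolvent00}) and then into~(\ref{eq:unitarygroup}) produces~(\ref{f0t}) and~(\ref{xit}) by inspection. The main obstacle of the whole proof is the Hilbert-scale bookkeeping in claim~(ii): every formal identity in the resolvent manipulation must be verified as a genuine equality in $\mathcal{H}$, and it is precisely the condition $g\in\mathcal{H}_{-2}$ that guarantees each intermediate quantity is a bona fide vector rather than a purely formal symbol.
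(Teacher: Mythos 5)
Your proposal is correct and takes essentially the same route as the paper: (i) is proved by the same change of variable $\xi\mapsto\eta=\xi-x\frac{\Omega}{\Omega^2+1}g$ and the identity $\frac{\Omega^2}{\Omega^2+1}+\frac{1}{\Omega^2+1}=I$, (ii) by exhibiting the bounded operator in~\eqref{resolvent00} as a two-sided inverse of $H_{g,\varepsilon}-z$ together with symmetry, and (iii) by the resolvent-to-group inversion formula specialised to $\Psi_0$ (your Hilbert-scale bookkeeping is in fact more explicit than the paper's). The only points the paper makes that you leave implicit are the density of $D(H_{g,\varepsilon})$ in $\CC\oplus\mathcal{H}$ (needed before ``symmetric with $H_{g,\varepsilon}-z$ surjective for all nonreal $z$'' yields self-adjointness) and, tacitly in both arguments, the non-vanishing of $\varepsilon-z-\Sigma_g(z)$ off the real axis, which follows from the Herglotz property of $\Sigma_g$; both are one-line checks.
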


	\begin{proof}
	$(i)$ If $g\in\mathcal{H}$, then $\frac{\Omega}{\Omega^2+1}g\in \mathcal{H}_{2}=D(\Omega)$ and hence the domains in Eqs.~\eqref{domain0} and~\eqref{domain} coincide. Applying $H_{\mathrm{FL}}$ to any vector of the form~\eqref{domain} yields the same result as in Eq.~\eqref{action}, hence the two operators coincide.
	
	$(ii)$ Since $D(\Omega)$ is dense in $\mathcal{H}$, then $D(H_{g,\varepsilon})$ is dense in $\mathbb{C}\oplus\mathcal{H}$. Moreover, the self-energy function $\Sigma_g$ in Eq.~\eqref{self} is well defined and a direct calculation shows that $H_{g,\varepsilon}$ is symmetric. Finally the bounded operator acting on $\mathbb{C}\oplus\mathcal{H}$ as in Eq.~\eqref{resolvent00} is the inverse of $H_{g,\varepsilon}-z$ for any $z\in\CC\setminus\R$ and the  self-adjointess of $H_{g,\varepsilon}$ easily follows  from that.
	
	$(iii)$ Eq.~\eqref{eq:unitarygroup} follows from Eq.~\eqref{resolvent00} and from the  link between the resolvent and the evolution group associated with any self-adjoint operator~\cite{Exner}; Eq.~\eqref{f0t}, in particular, follows by substituting $x=1$ and $\xi=0$ in Eq.~\eqref{resolvent00} and applying Eq.~\eqref{eq:unitarygroup}.
	\end{proof}

\begin{remark}
\label{rmk:3.3}
We can distinguish three separate cases:\begin{enumerate}
	\item[1.] $g\in\mathcal{H}$: the domain  $D(H_{g,\e})$ does not depend on $g$, and both $\varepsilon_{\textrm{a}}$ and $\varepsilon$ are finite quantities, representing respectively the ``bare" and ``dressed" (coupling-dependent) excitation energy of the atom. The formal matrix expression~\eqref{intro} of the Hamiltonian holds.
	\item[2.] $g\in\mathcal{H}_{-1}\setminus\mathcal{H}$: the domain $D(H_{g,\e})$ depends on $g$, but again both $\varepsilon_{\textrm{a}}$ and $\varepsilon$ are finite quantities with the same physical meaning as above, since $\braket{g\big|\frac{\Omega}{\Omega^2+1}g}$ is finite. Again the Hamiltonian can be written as in Eq.~\eqref{intro}.
	\item[3.] $g\in\mathcal{H}_{-2}\setminus\mathcal{H}_{-1}$: the domain $D(H_{g,\e})$ depends on $g$, and the bare excitation energy $\varepsilon_{\textrm{a}}$ is not defined, since $\braket{g\big|\frac{\Omega}{\Omega^2+1}g}$ is not finite; because of that, Eq.~\eqref{intro} is  ill-defined.
\end{enumerate}
The latter situation is related to the renormalisation procedure of quantum field theory, in which the bare (and hence unobservable) value of a parameter, e.g. the electron charge, diverges in such a way to obtain a finite value of the measurable dressed one. Besides, in the first two cases we may equivalently write
\begin{equation}\label{eqn:barese}
	\varepsilon-\Sigma_g(z)=\varepsilon_{\textrm{a}}-\tilde{\Sigma}_{g}(z)
\end{equation}
where $\tilde{\Sigma}_{g}(z)=\braket{g\big|\frac{1}{\Omega-z}g}$ is the ``bare'' self-energy. In this sense, the extension of the model to the case $g\in\mathcal{H}_{-1}$ is straightforward up to an algebraic technicality, i.e.\  the choice of a convenient representation of the domain, while the further extension to the case $g\in\mathcal{H}_{-2}$ requires an ``infinite" term to be added to both the bare excitation energy and the bare self-energy.

Finally, the three cases reflect the possible situations in which $\Psi_0$ has
\begin{enumerate}
	\item[1.] finite mean value and variance of energy $H_{g,\varepsilon}$;
	\item[2.] finite mean value, but infinite variance of  $H_{g,\varepsilon}$;
	\item[3.] infinite mean value and variance of  $H_{g,\varepsilon}$.
\end{enumerate}
\end{remark}
When $g\notin\mathcal{H}$, we will say that the atom-field coupling is \emph{singular}, as opposed to the regular case $g\in\mathcal{H}$. We would like to obtain a singular Friedrichs-Lee Hamiltonian as the limit, in a suitable sense, of a sequence of regular models. The usual notions of norm and strong operator convergence cannot apply because of the unboundedness of the operators and the fact that the domain of the singular Hamiltonian is coupling-dependent; however, we can resort to the notions of \emph{resolvent} and \emph{dynamical} convergence. 

Recall that, given a family $(T_n)_{n\in\mathbb{N}}$ of self-adjoint operators on a Hilbert space $\mathcal{K}$ and a self-adjoint operator $T$, the sequence $(T_n)_{n\in\mathbb{N}}$ is said to converge to $T$:
\begin{itemize}
	\item in the \emph{norm} (resp. \emph{strong}) \emph{resolvent sense} if, for all $z\in\CC\setminus\R$ $\frac{1}{T_n-z}\to\frac{1}{T-z}$ in the norm (resp. strong) sense, as $n \to \infty$;
	\item in the \emph{norm} (resp. \emph{strong}) \emph{dynamical sense} if, for all $t\in\mathbb{R}$, $e^{-itT_n}\to e^{-itT}$ in the norm (resp. strong) sense, as $n \to \infty$.
\end{itemize}
The following result holds:
\begin{theorem}[Singular coupling limit]\label{thm:singlimit}
Let $\e \in \R$.
	\begin{itemize}
		\item [(i)] If $(g_n)_{n\in\mathbb{N}}\subset\mathcal{H}$ is a convergent sequence in the norm of $\mathcal{H}_{-2}$ with limit $g \in \mathcal{H}_{-2}$, then $H_{g_n,\varepsilon}\to H_{g,\varepsilon}$ in the norm resolvent sense and in the strong dynamical sense as $n\to\infty$.
		\item [(ii)] Conversely, for every singular Friedrichs-Lee Hamiltonian $H_{g,\varepsilon}$, $g \in \mathcal{H}_{-2}$, there exists a sequence $(g_n)_{n\in\mathbb{N}}\subset \mathcal{H}$ such that $H_{g_n,\varepsilon}\to H_{g,\varepsilon}$ in the norm resolvent sense and in the strong dynamical sense as $n\to\infty$.
	\end{itemize}
\end{theorem}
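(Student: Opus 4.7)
The plan is to prove both parts by exploiting the explicit resolvent formula~\eqref{resolvent00} from Theorem~\ref{thm:singmodel}(ii), which expresses $R_{g,\varepsilon}(z) := (H_{g,\varepsilon}-z)^{-1}$ as a rational function of $g$ that depends on $g$ only through three basic ingredients: the vector $\frac{1}{\Omega-z}g \in \mathcal{H}$, the conjugate functional $\langle g|\frac{1}{\Omega-z}\,\cdot\,\rangle = \langle\frac{1}{\Omega-\bar z}g|\,\cdot\,\rangle$, and the scalar self-energy $\Sigma_g(z)$. Since $\frac{1}{\Omega-z}:\mathcal{H}_{-2}\to\mathcal{H}$ is bounded (with norm controlled by $|\mathrm{Im}\,z|^{-1}$ times a constant), convergence of $g_n$ to $g$ in the $\|\cdot\|_{-2}$ norm immediately controls the first two ingredients in $\mathcal{H}$-norm; once $\Sigma_{g_n}(z)\to\Sigma_g(z)$ is settled, norm resolvent convergence drops out by writing $R_{g_n}(z)-R_{g,\varepsilon}(z)$ as a sum of terms each of the form (scalar converging to $0$)$\cdot$(uniformly bounded operator) plus (uniformly bounded scalar)$\cdot$(rank-one operator converging to $0$ in norm).

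For (i), the key quantitative step is to verify that $\Sigma_g(z)$ depends continuously on $g$ in $\mathcal{H}_{-2}$. By~\eqref{self} and the identity $\frac{1}{\Omega-z}-\frac{\Omega}{\Omega^2+1} = \frac{1+z\Omega}{(\Omega-z)(\Omega^2+1)}$, the operator inside the bracket is a bounded map $\mathcal{H}_{-2}\to\mathcal{H}_2$, so that the sesquilinear form splits as $\Sigma_{g_n}(z)-\Sigma_g(z) = \langle g_n-g|A(z)(g_n+g)\rangle/\,\text{(polarised)}$ and is bounded by $C(z)\|g_n-g\|_{-2}(\|g_n\|_{-2}+\|g\|_{-2})$. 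Since $\mathrm{Im}\,\Sigma_g(z)$ is positive for $\mathrm{Im}\,z>0$ (Herglotz property), the denominator $\varepsilon-z-\Sigma_g(z)$ is bounded away from $0$ uniformly in $n$ on any compact set avoiding $\R$, so the reciprocals also converge. Putting these estimates into~\eqref{resolvent00} and bounding each block entry, norm convergence of $R_{g_n}(z)$ to $R_{g,\varepsilon}(z)$ follows for every $z\in\CC\setminus\R$. The strong dynamical convergence $e^{-itH_{g_n,\varepsilon}}\to e^{-itH_{g,\varepsilon}}$ is then an immediate application of the standard fact that norm (in fact even strong) resolvent convergence of self-adjoint operators implies strong convergence of the associated unitary groups.

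For (ii), I would exhibit the explicit cutoff sequence $g_n := \mathbf{1}_{\{|\omega|\le n\}}\, g$. On the cutoff set $|\omega(k)-i|\le\sqrt{n^2+1}$, which gives $\int |g_n|^2\,\mathrm{d}\mu \le (n^2+1)\,\|g\|_{-2}^2 < \infty$, so $g_n\in\mathcal{H}$; conversely, $\|g-g_n\|_{-2}^2 = \int_{|\omega|>n} |g|^2/|\omega-i|^2\,\mathrm{d}\mu \to 0$ by dominated convergence, since the integrand is dominated by an integrable function. Part~(i) applied to this sequence then yields both convergences claimed in (ii).

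The main technical point is genuinely the norm-resolvent estimate in part~(i): one must recognise that all the $g$-dependent quantities in~\eqref{resolvent00} are in fact continuous functions of $g$ in the weakest relevant norm $\|\cdot\|_{-2}$, which is exactly the regularity setting in which the singular model was constructed. Once this is in place, both statements reduce to bookkeeping on the explicit resolvent formula, with the construction of the approximating sequence being elementary.
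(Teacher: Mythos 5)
Your proof is correct and follows essentially the same route as the paper's: both reduce everything, via the explicit resolvent formula \eqref{resolvent00} and the equivalence of strong resolvent with strong dynamical convergence, to showing that $\frac{1}{\Omega-z}g_n\to\frac{1}{\Omega-z}g$ in $\mathcal{H}$ and $\Sigma_{g_n}(z)\to\Sigma_g(z)$, which is exactly how convergence in $\|\cdot\|_{-2}$ enters. The only cosmetic differences are that the paper organises the bookkeeping by writing the resolvent as the free resolvent plus a finite-rank correction, and for (ii) it simply invokes density of $\mathcal{H}$ in $\mathcal{H}_{-2}$, of which your cutoff sequence $\mathbf{1}_{\{|\omega|\le n\}}\,g$ is an explicit realisation.
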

\begin{proof}
	First of all, notice that norm resolvent convergence obviously implies strong resolvent convergence, and the latter is \emph{equivalent} to strong dynamical convergence (see e.g.~\cite{DeOliveira}), hence we only need to prove the results about norm resolvent convergence.
	
	$(i)$ A direct calculation shows that the resolvent operator in Eq.~\eqref{resolvent00},  with form factor $g_n\in\mathcal{H}$ and $g\in\mathcal{H}_{-2}$ respectively, can be written as the sum of the resolvent of the uncoupled operator $H_{0,\e}$ plus a finite-rank term, namely,
	\begin{equation}
\frac{1}{H_{g_n,\e}-z}=\frac{1}{H_{0,\e}-z}+ K_n,	
        \end{equation}
	\begin{equation}
	\frac{1}{H_{g,\e}-z}=\frac{1}{H_{0,\e}-z}+ K,	
	\end{equation}
	where for all $\Psi=\begin{pmatrix} x \\ \xi \end{pmatrix} \in \mathbb{C} \oplus \mathcal{H}$
	\begin{equation}
	K_n \Psi= \frac{1}{\e-z-\Sigma_{g_n}(z)} \begin{pmatrix}
		x \frac{ \Sigma_{g_n}(z)}{\e-z} - \braket{\frac{1}{\Omega-\bar{z}}g_n | \xi } \\
		 x \frac{1 }{\Omega-z } g_n + \braket{\frac{1}{\Omega - \bar{z}} g_n | \xi} \frac{1}{\Omega - z} g_n
		\end{pmatrix}
	\end{equation}
	and
	\begin{equation}
	K \Psi= \frac{1}{\e-z-\Sigma_{g}(z)} \begin{pmatrix}
		x \frac{ \Sigma_{g}(z)}{\e-z} - \braket{\frac{1}{\Omega-\bar{z}}g | \xi } \\
		 x \frac{1 }{\Omega-z } g + \braket{\frac{1}{\Omega - \bar{z}} g| \xi} \frac{1}{\Omega - z} g
		\end{pmatrix}.
	\end{equation}
	Hence the claim is proven if we show that for all $z\in\mathbb{C}\setminus\R$
	\begin{itemize}
		\item $\left \|\frac{1}{\Omega-z}g_n - \frac{1}{\Omega-z}g \right \| \to 0$, as $n \to \infty$;
		\item $\Sigma_{g_n}(z)\to\Sigma_g(z)$, as $n \to \infty$.
	\end{itemize}	
	We recall that $g_n\to g$,  as $n \to \infty$, in the norm of $\mathcal{H}_{-2}$, means that $\|\frac{1}{\Omega-i}(g_n-g)\|\to0$, and equivalently, by the first resolvent formula, $\|\frac{1}{\Omega-z}g_n - \frac{1}{\Omega-z}g\| \to 0$, for every $z\in\mathbb{C}\setminus\mathbb{R}$; moreover, by continuity of the pairing between $\mathcal{H}_2$ and $\mathcal{H}_{-2}$, this also proves $\Sigma_{g_n}(z)\to\Sigma_g(z)$.
	
	$(ii)$ Let $g\in\mathcal{H}_{-2}$. Since $\mathcal{H}$ is dense in $\mathcal{H}_{-2}$, there exists a sequence $(g_n)_{n\in\mathbb{N}}$ such that $g_n\to g$, as $n\to\infty$, in the norm of $\mathcal{H}_{-2}$, and hence, as in $(i)$, a sequence of regular Friedrichs-Lee Hamiltonians which converges in the norm resolvent sense to the singular one.
\end{proof}
\begin{remark}
If $g\in\mathcal{H}_{-2}$, the approximating sequence of Hamiltonians $H_{g_n, \e}$ is characterised by a diverging bare excitation energy $\varepsilon_{\mathrm{a},n}= \e+ \braket{g_n | \frac{\Omega}{\Omega^2+1} g_n}$ \emph{and} a diverging bare self-energy $\tilde{\Sigma}_{g_n}(z) = \braket{g_n | \frac{1}{\Omega-z} g_n}$; their difference converges to a finite limit which depends on the value of the dressed excitation energy $\varepsilon$; this is clearly a renormalisation procedure, as discussed  in Remark~\ref{rmk:3.3}.

Also notice that  Theorem~\ref{thm:singlimit} holds even if the dressed energy $\varepsilon$ of the approximating sequence, instead of being kept fixed, is replaced with a converging sequence $\varepsilon_n\to\varepsilon$.
\end{remark}
\begin{remark}
	There is an interesting connection between the Friedrichs-Lee model and rank-one perturbations of self-adjoint operators. Given a Hilbert space $\mathcal{K}$, a self-adjoint operator $T$ on $\mathcal{K}$, and a functional $\varphi\in\mathcal{K}_{-2}$, consider the  formal operator
	\begin{equation}
	T+\alpha\ket{\varphi}\bra{\varphi}, \qquad \alpha \in \R.
	\label{formal}
	\end{equation}
	If $\varphi\in\mathcal{K}_{-2}\setminus\mathcal{K}$, this is only a formal expression, with which one can associate a well-defined self-adjoint operator through a restriction-extension procedure~\cite{AlbeverioKurasov,SimonRankOne2,Posilicano2003,Posilicano2008} that we  briefly recall in the following. Consider the densely defined symmetric operator $T_\varphi$ obtained by restricting $T$ to $\ker(\bra{\varphi})$; one can prove that its adjoint $T^*_\varphi$ is the operator with domain
	\begin{equation}
	D(T^*_{\varphi})=\left\{\xi-x\frac{T}{T^2+1}\varphi\,\bigg|\,\xi\in D(T),\,x\in\CC\right\}
	\end{equation}
	acting as
	\begin{equation}
	T^*_{\varphi}\left(\xi-x\frac{T}{T^2+1}\varphi\right)=T\xi+x\frac{1}{T^2+1}\varphi.
	\end{equation}
	Then, for $\varphi\in\mathcal{K}_{-1}\setminus\mathcal{K}$, the restriction $T_{\varphi,\alpha}$ of $T^*_\varphi$ to the domain
	\begin{equation}
		D(T_{\varphi,\alpha})=\left\{\xi-x\frac{T}{T^2+1}\varphi \,\bigg|\, \xi\in D(T),\,x\in\CC, \,
		\braket{\varphi|\xi}=-x\left(\frac{1}{\alpha}+c_\varphi\right)\right\},
	\end{equation}
	with
	$c_\varphi=\langle\varphi |\frac{T}{T^2+1}\varphi\rangle$,
	is a self-adjoint operator which corresponds to a singular rank-one perturbation of $T$: indeed, if one applies the formal expression~\eqref{formal} on vectors in the above domain, all terms outside the Hilbert space $\mathcal{K}$ cancel out because of the constraint in the domain definition.	
	
	If, instead, $\varphi\in\mathcal{K}_{-2}\setminus\mathcal{K}_{-1}$, there is an issue: the action of $\bra{\varphi}$ on $\frac{T}{T^2+1}\varphi$ is not defined. However, we can still fix some arbitrary $c\in\mathbb{R}$ and define $T^c_{\varphi,\alpha}$ as the restriction of $T^*_\varphi$ to the domain
	\begin{equation}
	D(T^c_{\varphi,\alpha})=\left\{\xi-x\frac{T}{T^2+1}\varphi \,\bigg|\, \xi\in D(T),\,x\in\CC, \,
	\braket{\varphi|\xi}=-x\left(\frac{1}{\alpha}+c\right)\right\}.
	\end{equation}
	
	Interestingly, in both cases, the dependence on the parameter $\alpha$ is in the domain of the operator, and not in its action. Besides, in the case $\varphi\in\mathcal{K}_{-2}\setminus\mathcal{K}_{-1}$, the operator depends on $1/\alpha+c$ rather than on $\alpha$ and $c$ separately.
	
	Now we present an alternative procedure obtained by introducing an extra degree of freedom in the system, i.e.\ by searching for self-adjoint realisations on the larger Hilbert space $\mathbb{C}\oplus\mathcal{K}$, rather than on $\mathcal{K}$. Let us define the operator $\tilde{T}^c_{\varphi,\alpha}$ with domain
	\begin{equation}
		D(\tilde{T}^c_{\varphi,\alpha})=\left\{\begin{pmatrix}
x\\\xi-x\frac{T}{T^2+1}\varphi
		\end{pmatrix}\,\Bigg|\,\xi\in D(T),\,x\in\CC\right\}
	\end{equation}
	acting as
	\begin{equation}
\tilde{T}^c_{\varphi,\alpha}\begin{pmatrix}
x\\\xi-x\frac{T}{T^2+1}\varphi
\end{pmatrix}=\begin{pmatrix}
\left(\frac{1}{\alpha}+c\right)x+\braket{\varphi|\xi}\\
T\xi+x\frac{1}{T^2+1}\varphi
\end{pmatrix}.
	\end{equation}
	In practice, we are handling the ``diverging" terms by enlarging the Hilbert space instead than  imposing a constraint: the dependence on $\alpha$ (and on $c$) is therefore moved to the action of the operator rather than to its domain.
	
	Now, by choosing $\mathcal{K}=\mathcal{H}$, $\Omega$ as the multiplication operator by $\omega$, and $\varphi=g$, the operator $\tilde{T}^c_{\varphi,\alpha}$ corresponds to a singular Friedrichs-Lee Hamiltonian $H_{g,\e}$ with dressed excitation energy
	\begin{equation}
		\varepsilon=\frac{1}{\alpha}+c.
	\end{equation}
	Notice that the freedom in the choice of $c$ reflects the fact that, in the Friedrichs-Lee model, a bare excitation energy is not defined for $g\notin\mathcal{H}_{-1}$: the operator really depends only on $1/\alpha+c$, in the same way as the Friedrichs-Lee Hamiltonian ultimately depends on the dressed energy $\varepsilon$ alone.
\end{remark}

\section{Spectral properties}  \label{sect:3}
After having introduced the model, let us characterise its spectral properties with respect to two common decompositions of the spectrum of a self-adjoint operator:
\begin{itemize}
	\item absolutely continuous, singular continuous and pure point spectrum;
	\item essential and discrete spectrum,
\end{itemize}
(the discrete spectrum being the set of all isolated eigenvalues of finite multiplicity~\cite{DeOliveira}).

In the absence of coupling (i.e.\  for $g=0$), the spectrum of the Friedrichs-Lee Hamiltonian $H_{0,\e}$ is obviously
\begin{equation}
	\sigma(H_{0,\e})=\{\varepsilon\}\cup\sigma(\Omega),
\end{equation}
with $\sigma(\Omega)$ being the spectrum of $\Omega$, i.e.\  the closure of the $\mu$-essential range of $\omega$, 
\begin{equation}
\mu\operatorname{-\, Ran}(\omega)= \left\{\lambda \in \R \,|\, \mu\bigl( \omega^{-1} ((\lambda - \delta, \lambda + \delta))\bigr)>0, \quad \forall \delta>0\right\},
\end{equation}
which coincides with the support of the induced measure $\nu_{\Omega}$ defined as 
\begin{equation}
	\nu_{\Omega}(B)=\int_{\omega^{-1}(B)}\mathrm{d}\mu(k),
\end{equation}
for every Borel set $B\subset\mathbb{R}$. 

Here and henceforth the \emph{support of a Borel measure} $\nu$ on $\R$ is the set
\begin{equation}
\label{eq:suppdef}
\operatorname{supp}(\nu) = \left\{\lambda \in \R \,|\, \nu\bigl((\lambda - \delta, \lambda + \delta)\bigr)>0, \quad \forall \delta>0\right\}.
\end{equation}
This is the smallest closed set $C$ such that $\nu(\R\setminus C) =0$, and is also known as  the set of growth points of $\nu$ or the spectrum of $\nu$. 

Moreover, by abusing English language with the use of an adjective to modify rather to limit a noun, we define a \emph{minimal support} (or an essential support) of the measure~$\nu$ a set $M$ such that $\nu(\R\setminus M) =0$, and such that every Borel subset $M_0\subset M$ with $\nu(M_0)=0$ has also zero Lebesgue measure. Notice that a minimal support is not unique: minimal supports may differ by sets of zero Lebesgue and $\nu$ measure, and it can happen either that $\operatorname{supp}(\nu)$ is a minimal support or that it is not: there  exists always a minimal support of $\nu$ whose closure coincides with $\operatorname{supp}(\nu)$, but, in general, the closure of a minimal support may differ from $\operatorname{supp}(\nu)$ by a set of nonzero Lebesgue measure.

The absolutely continuous (ac), singular continuous (sc) and pure point (pp) components of $\sigma(\Omega)$ coincide with the supports of the ac, sc and pp components of $\nu_\Omega$, whence
\begin{itemize}
	\item $\sigma_{\mathrm{ac}}(H_{0,\e})=\sigma_{\mathrm{ac}}(\Omega)$;
	\item $\sigma_{\mathrm{sc}}(H_{0,\e})=\sigma_{\mathrm{sc}}(\Omega)$;
		\item $\sigma_{\mathrm{pp}}(H_{0,\e})=\{\varepsilon\}\cup\sigma_{\mathrm{pp}}(\Omega)$,
\end{itemize}
with $\Psi_0$ being the eigenvector in Eq.~\eqref{vacuum} belonging to $\varepsilon$. At the physical level, $\operatorname{supp}(\nu_\Omega)=\sigma(\Omega)$ is the energy space of the boson. As for the distinction between essential and discrete spectrum, in the most general case we have
\begin{itemize}
	\item $\sigma_{\mathrm{ess}}(H_{0,\e})=\sigma_{\mathrm{ess}}(\Omega)$;
	\item $\sigma_{\mathrm{dis}}(H_{0,\e})\setminus\{\e\}=\sigma_{\mathrm{dis}}(\Omega)$,
\end{itemize}
with $\e$ belonging to the discrete spectrum  $\sigma_{\mathrm{dis}}(H_{0,\e})$ if and only if
$\e$ is isolated from the spectrum of $\Omega$.

We want to find a complete characterisation of the spectral properties of $H_{g,\e}$ with respect to the spectrum of $\Omega$ for a generic form factor $g \in \mathcal{H}_{-2}$. First of all, let us examine the behavior of the discrete/essential decomposition of the spectrum.
\begin{theorem}
Let $\e \in \R$ and $g \in \mathcal{H}_{-2}$. The essential spectrum of $H_{g,\e}$ in~\eqref{domain}--\eqref{action}
coincides with the essential spectrum of $\Omega$, with the  possible exception of 
the accumulation points of the eigenvalues of $\Omega$.
\label{thm:spessential}
\end{theorem}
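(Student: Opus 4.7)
The plan is to use Weyl's theorem on the stability of the essential spectrum under compact perturbations of the resolvent. The essential input is the explicit resolvent formula~\eqref{resolvent00} established in Theorem~\ref{thm:singmodel}, together with the computation already performed in the proof of Theorem~\ref{thm:singlimit}.

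First I would specialise~\eqref{resolvent00} to $g=0$, obtaining the resolvent of the uncoupled operator $H_{0,\e}$, which is block diagonal: it acts as $(\e-z)^{-1}$ on $\CC$ and as $(\Omega-z)^{-1}$ on $\mathcal{H}$. Subtracting this from~\eqref{resolvent00}, exactly as in the proof of Theorem~\ref{thm:singlimit}, I would observe that the difference
\begin{equation*}
\frac{1}{H_{g,\e}-z}-\frac{1}{H_{0,\e}-z}
\end{equation*}
maps every $(x,\xi)\in\CC\oplus\mathcal{H}$ into the (at most) two-dimensional span of $(1,0)$ and $(0,(\Omega-z)^{-1}g)$, so it is finite-rank and in particular compact. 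By Weyl's theorem on the invariance of the essential spectrum under compact resolvent perturbations of self-adjoint operators (see e.g.~\cite{DeOliveira}) this already yields $\sigma_{\mathrm{ess}}(H_{g,\e})=\sigma_{\mathrm{ess}}(H_{0,\e})$. To transfer the conclusion to $\Omega$ I would use that $H_{0,\e}$ is the orthogonal direct sum of the scalar $\e$ on $\CC$ with $\Omega$ on $\mathcal{H}$, so that $\sigma(H_{0,\e})=\{\e\}\cup\sigma(\Omega)$ and the extra point $\e$, if isolated from $\sigma(\Omega)$, contributes only a discrete eigenvalue, while otherwise it already lies in $\sigma_{\mathrm{ess}}(\Omega)$; in either case $\sigma_{\mathrm{ess}}(H_{0,\e})=\sigma_{\mathrm{ess}}(\Omega)$.

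The delicate point, reflected by the ``possible exception'' clause in the statement, concerns accumulation points of the eigenvalues of $\Omega$. Such a point $\lambda$ is automatically in $\sigma_{\mathrm{ess}}(\Omega)$ (being non-isolated in $\sigma(\Omega)$), and by the Weyl argument above it also sits in $\sigma_{\mathrm{ess}}(H_{g,\e})$, so as far as the set-theoretic equality of the two essential spectra is concerned there is no obstruction. The genuine difficulty — which is where I expect the main work to lie — is that the finite-rank perturbation can dislodge the eigenvalues of $\Omega$ that accumulate at $\lambda$, so these approximate eigenvectors need no longer produce eigenvalues of $H_{g,\e}$; consequently the spectral \emph{type} of $\lambda$ in the perturbed operator (continuous, an eigenvalue of infinite multiplicity, or again an accumulation of point spectrum) is not controlled by Weyl's theorem alone and requires the analysis of the self-energy $\Sigma_g$ carried out in Theorem~\ref{thm:spe}. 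It is precisely this ambiguity on the \emph{nature} of the essential spectrum at accumulation points of eigenvalues of $\Omega$, not on its location, that motivates the hedged phrasing of the theorem.
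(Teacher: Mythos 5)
Your proof is correct, but it takes a genuinely different route from the paper's and in fact yields a stronger conclusion. The paper argues in two steps: for regular $g\in\mathcal{H}$ it writes $H_{g,\e}=H_{0,\e}+K_{g,\e}$ with $K_{g,\e}$ finite rank \emph{at the operator level} (this uses the matrix form~\eqref{intro}, unavailable for singular $g$), and for $g\in\mathcal{H}_{-2}\setminus\mathcal{H}$ it invokes Theorem~\ref{thm:singlimit} to realise $H_{g,\e}$ as a norm resolvent limit of regular models, using the fact that such limits preserve the essential spectrum only up to a possible exceptional set --- that approximation step is precisely the source of the ``accumulation points'' caveat in the statement. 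You instead work directly at the resolvent level for arbitrary $g\in\mathcal{H}_{-2}$: by~\eqref{resolvent00} the difference of the resolvents of $H_{g,\e}$ and $H_{0,\e}$ has rank at most two, its range lying in the span of $(1,0)$ and $\bigl(0,(\Omega-z)^{-1}g\bigr)$, the latter belonging to $\mathcal{H}$ exactly because $g\in\mathcal{H}_{-2}$ (this is the operator $K$ already computed in the proof of Theorem~\ref{thm:singlimit}). Weyl's theorem for compact resolvent differences of self-adjoint operators, together with $\sigma_{\mathrm{ess}}(H_{0,\e})=\sigma_{\mathrm{ess}}(\Omega)$ for the direct sum, then gives $\sigma_{\mathrm{ess}}(H_{g,\e})=\sigma_{\mathrm{ess}}(\Omega)$ with no exceptional set at all, which is strictly more than the theorem asserts and bypasses the approximation argument entirely. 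The only inaccuracy is your closing interpretation of the hedge: in the paper it is not a statement about the spectral \emph{type} at accumulation points (which is indeed uncontrolled --- cf.\ Example~\ref{ex5} --- but is not what the essential spectrum records); it is a genuine set-theoretic caveat forced by the norm-resolvent-limit argument, and your direct argument simply renders it unnecessary.
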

\begin{proof}
	Suppose $g\in\mathcal{H}$, i.e.\  consider the regular model. Then, using the matrix representation for $H_{g,\e}$, we can write
	\begin{equation}
		H_{g,\e}=\begin{pmatrix}
		\e&0\\
		0&\Omega
		\end{pmatrix}+\begin{pmatrix}
		\varepsilon_{\textrm{a}}-\e&\bra{g}\\
		g&0
		\end{pmatrix}\equiv H_{0,\e}+K_{g,\e},
	\end{equation}
	with $\varepsilon_{\textrm{a}}$ as in Eq.~\eqref{shift}. $K_{g,\e}$ is finite-rank and hence leaves the essential spectrum unchanged, thus $\sigma_{\mathrm{ess}}(H_{g,\e})=\sigma_{\mathrm{ess}}(H_{0,\e})=\sigma_{\mathrm{ess}}(\Omega)$. 
	
	If $g\in\mathcal{H}_{-2}\setminus\mathcal{H}$, Theorem~\ref{thm:singlimit} ensures that $H_{g,\e}$ will be the norm resolvent limit of a sequence of regular models sharing the same essential spectrum. Under these conditions, the norm resolvent limit preserves the essential spectrum with the possible exception of accumulation points of the eigenvalues of $\Omega$~\cite{DeOliveira}. This proves the claim.
\end{proof}
\begin{remark}
	As a consequence of Theorem~\ref{thm:spessential}, when $\sigma_{\mathrm{ess}}(\Omega)$ is entirely continuous or dense pure point, it will coincide with the essential spectrum of the corresponding Friedrichs-Lee operator. However, there may be conversion of dense pure point spectrum into continuous spectrum or vice versa.
\end{remark}
Now let us study the decomposition of the spectrum $\sigma(H_{g,\e})$ into its absolutely continuous, singular continuous and pure point components. We define the \emph{coupling measure} associated with $\Omega$ and $g$ as 
\begin{equation}
	\kappa_g(B)=\int_{\omega^{-1}(B)}|g(k)|^2\,\mathrm{d}\mu(k),
		\label{eqn:nug}
\end{equation}
for all Borel sets $B \subset \mathbb{R}$.

 We have the following result:
\begin{theorem}
Let $\e \in \R$ and $g \in \mathcal{H}_{-2}$. Let $\mathcal{H}_g=L^2(X_g,\mu)$, with
\begin{equation}\label{eqn:xg}
X_g=\omega^{-1}\left(\operatorname{supp}(\kappa_g)\right),
\end{equation}
and let $\mathcal{H}_g^\perp$ be its orthogonal complement. Let $G_g:\R\rightarrow~\mathbb{R}~\cup~\{\infty\}$ be the function \begin{equation}
	G_g(\lambda)=\int_\R\frac{1}{(\lambda-\lambda')^2}\,\mathrm{d}\kappa_g(\lambda'),
	\label{eqn:gi}
\end{equation}
and let $\Sigma_g^+$ be   the boundary value on the real axis of the self-energy function $\Sigma_g$ in~\eqref{self}, defined almost everywhere by
\begin{equation}
\Sigma^+_g(\lambda)=\lim_{\delta\downarrow0}\Sigma_g(\lambda+i\delta).
\label{eq:Sigma+def}
\end{equation}
Then
\begin{itemize}
	\item $\sigma_{\mathrm{ac}}(H_{g,\varepsilon})=\sigma_{\mathrm{ac}}(\Omega)$;
		\item $\sigma_{\mathrm{pp}}(H_{g,\varepsilon})=\sigma_{\mathrm{pp}}\left(\Omega\big|_{\mathcal{H}_g^\perp}\right)\cup\overline{\left\{\lambda\in \R\,\big|\,\varepsilon-\lambda=\Sigma^+_g(\lambda),\,G_g(\lambda)<\infty\right\}}$;
		\item $\sigma_{\mathrm{sc}}\left(H_{g,\varepsilon}\big|_{\mathcal{H}_g^\perp}\right)= \sigma_{\mathrm{sc}}\left(\Omega\big|_{\mathcal{H}_g^\perp}\right)$;
\item the set $\left\{\lambda\in \R\,\big|\,\varepsilon-\lambda=\Sigma^+_g(\lambda),\,G_g(\lambda)=\infty\right\}$ is a minimal support for a maximal singular continuous measure of $H_{g,\varepsilon}$;
\item the restrictions of $\Omega$ and $H_{g,\e}$ to their absolutely continuous subspaces are unitarily equivalent.
\end{itemize}
\label{thm:spe}
\end{theorem}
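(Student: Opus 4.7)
My plan is to split $\CC\oplus\mathcal{H}$ into two invariant pieces and analyse each independently: on $\mathcal{H}_g^\perp$ by reduction to $\Omega$, on $\CC\oplus\mathcal{H}_g$ via the Herglotz function associated with $\Psi_0$ through the resolvent formula~\eqref{resolvent00}. For the reducing step, I would use $\CC\oplus\mathcal{H}=(\CC\oplus\mathcal{H}_g)\oplus\mathcal{H}_g^\perp$: since $|g|^2\,d\mu$ is concentrated on $X_g$, the pairing $\braket{g|\eta}$ vanishes for every $\eta\in\mathcal{H}_g^\perp\cap\mathcal{H}_2$, so~\eqref{action} gives $H_{g,\varepsilon}(0,\eta)=(0,\Omega\eta)$. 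Hence $\mathcal{H}_g^\perp$ is invariant and $H_{g,\varepsilon}|_{\mathcal{H}_g^\perp}=\Omega|_{\mathcal{H}_g^\perp}$, which immediately yields the equality $\sigma_{\mathrm{sc}}(H_{g,\varepsilon}|_{\mathcal{H}_g^\perp})=\sigma_{\mathrm{sc}}(\Omega|_{\mathcal{H}_g^\perp})$, the contribution $\sigma_{\mathrm{pp}}(\Omega|_{\mathcal{H}_g^\perp})$ to $\sigma_{\mathrm{pp}}(H_{g,\varepsilon})$, and the inclusion $\sigma_{\mathrm{ac}}(\Omega|_{\mathcal{H}_g^\perp})\subseteq\sigma_{\mathrm{ac}}(H_{g,\varepsilon})$.

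On the coupled block, specialising~\eqref{resolvent00} to $x=1$, $\xi=0$ yields
\[
F(z):=\braket{\Psi_0|(H_{g,\varepsilon}-z)^{-1}\Psi_0}=\frac{1}{\varepsilon-z-\Sigma_g(z)},
\]
a Herglotz function whose representing Borel measure is the spectral measure $\rho$ of $\Psi_0$. The self-energy admits the Cauchy-type representation
\[
\Sigma_g(z)=\int_\R\!\left(\frac{1}{\lambda'-z}-\frac{\lambda'}{\lambda'^2+1}\right)d\kappa_g(\lambda'),
\]
and is therefore itself Herglotz; boundary-value theory gives $\operatorname{Im}\Sigma_g^+(\lambda)=\pi(d\kappa_{g,\mathrm{ac}}/d\lambda)(\lambda)$ almost everywhere, and identifies $G_g(\lambda)$ with the nontangential derivative of $\Sigma_g$ at $\lambda$ whenever finite. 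The Aronszajn--Donoghue classification applied to $F$ then yields: (i) an ac density $d\rho_{\mathrm{ac}}/d\lambda=\operatorname{Im}\Sigma_g^+(\lambda)/(\pi|\varepsilon-\lambda-\Sigma_g^+(\lambda)|^2)$, positive exactly where $\kappa_{g,\mathrm{ac}}$ has support; (ii) atoms of $\rho$ located at the real zeros of $\varepsilon-z-\Sigma_g(z)$ with $G_g(\lambda)<\infty$, with residues $(1+G_g(\lambda))^{-1}$; (iii) a minimal support of $\rho_{\mathrm{sc}}$ given by the complementary set $\{\lambda:\varepsilon-\lambda=\Sigma_g^+(\lambda),\,G_g(\lambda)=\infty\}$. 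As a cross-check, the atoms can also be found by solving $H_{g,\varepsilon}\Psi=\lambda\Psi$ directly: the partial-fraction identity $(1+\lambda\omega)/((\omega^2+1)(\omega-\lambda))=1/(\omega-\lambda)-\omega/(\omega^2+1)$ forces the candidate eigenvector to involve $(\tfrac{1}{\Omega-\lambda}-\tfrac{\Omega}{\Omega^2+1})g$, which lies in $\mathcal{H}_2$ exactly when $G_g(\lambda)<\infty$ and collapses the eigenvalue system to $\varepsilon-\lambda=\Sigma_g^+(\lambda)$.

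To glue the blocks together, I would observe that $\Psi_0$ is cyclic for $H_{g,\varepsilon}$ in $\CC\oplus\mathcal{H}_g^{\mathrm{cyc}}$, where $\mathcal{H}_g^{\mathrm{cyc}}\subseteq\mathcal{H}_g$ denotes the $\Omega$-cyclic subspace generated by $g$: as $z$ varies in $\CC\setminus\R$, the second component $-F(z)(\Omega-z)^{-1}g$ of $(H_{g,\varepsilon}-z)^{-1}\Psi_0$ exhausts $\mathcal{H}_g^{\mathrm{cyc}}$. On the orthogonal complement of $\mathcal{H}_g^{\mathrm{cyc}}$ within $\mathcal{H}_g$, the reducing argument above shows once more that $H_{g,\varepsilon}$ acts as $\Omega$. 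Assembling these three invariant pieces produces the full spectral decomposition; in particular, the explicit density in (i) realises the ac spectral measures of $H_{g,\varepsilon}|_{\CC\oplus\mathcal{H}_g^{\mathrm{cyc}}}$ and $\Omega|_{\mathcal{H}_g^{\mathrm{cyc}}}$ as mutually absolutely continuous with a multiplier positive almost everywhere on their common support, giving the claimed unitary equivalence of the absolutely continuous restrictions.

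The hard part will be step (iii): one must not only distinguish proper poles of $F$ (which contribute atoms to $\rho$) from the boundary singularities arising when $G_g(\lambda)=\infty$, but also identify the latter set as a \emph{minimal} support of $\rho_{\mathrm{sc}}$ in the sense fixed before the theorem, rather than merely as a support. This relies on the Poltoratskii theorem on nontangential boundary values of Cauchy integrals together with the identification of $G_g$ as the boundary derivative of $\Sigma_g$, neither of which is automatic in the singular regime $g\in\mathcal{H}_{-2}\setminus\mathcal{H}$.
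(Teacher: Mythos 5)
Your proposal is correct and follows essentially the same route as the paper: decompose off the uncoupled part (on which $H_{g,\varepsilon}$ acts as $\Omega$), recognise $F(z)=\Pi_g(z)=1/(\varepsilon-z-\Sigma_g(z))$ as the Borel transform of the spectral measure of the cyclic vector $\Psi_0$, and read the ac/pp/sc decomposition off the boundary behaviour of $\Sigma_g$ and $G_g$, which is exactly the paper's argument via the Gesztesy--Tsekanovskii boundary-value theory (your Aronszajn--Donoghue/Poltoratskii phrasing and the direct eigenvector cross-check correspond to the paper's explicit $\delta\downarrow0$ asymptotics and its separate remark on the eigensystem). Incidentally, your atom weight $(1+G_g(\lambda))^{-1}$ is the correct constant --- the paper's asymptotic $1/G_g(\lambda)$ drops the $-i\delta$ in the resolvent denominator --- but this does not affect the pp/sc dichotomy or any conclusion of the theorem.
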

\begin{remark}
This result can be explained as follows. First of all, the space $X$ of field momenta can be split into a subset $X_g$ of momenta which are effectively coupled to the atom (i.e.\  on which the form factor $g$ is $\mu$-supported) and a complementary subset of uncoupled momenta; this subdivision induces a correspondent subdivision of the energy space (i.e.\  the support of $\nu_\Omega$) into coupled and uncoupled energies, i.e.\  the support of $\kappa_g$ and its complement. 

As expected, the uncoupled part of the spectrum is independent of $g$ and hence, in particular, is the same as in the case $g=0$. As for the coupled one, it turns out that the absolutely continuous spectrum is still unchanged, but the singular (i.e.\  pure point and singular continuous) spectrum will be minimally supported by the set of solutions of the equation
\begin{equation}
	\varepsilon-\lambda=\Sigma^+_g(\lambda).
	\label{eigeneq}
\end{equation}
Finally, notice that the pole equation~\eqref{eigeneq} admits the unique solution $\lambda=\e$ only when $g=0$ and, in this case, necessarily $G(\lambda)=0$, hence $\e$ is in the pure point spectrum; our result is in full agreement with the uncoupled case discussed above.
\end{remark}

\begin{remark}
If one substitutes $g$ with $\beta g$ for some $\beta\in\R$, and hence $\Sigma_{\beta g}(z)=\beta^2\Sigma_g(z)$, the singular spectrum becomes supported by the set of solutions of the equation $\Sigma_g^+(\lambda)=\frac{\varepsilon-\lambda}{\beta^2}$, which will be a different set for every value of $\beta$. Therefore the singular spectrum of the Friedrichs-Lee Hamiltonian $H_{g,\e}$  is highly coupling-dependent. 
\end{remark}

The proof of Theorem~\ref{thm:spe}  will  be given in section~\ref{sec:spe}. First we need some mathematical preliminaries.

\subsection{The self-energy as a Borel transform}
\begin{definition}
	Let $\nu$ be a Borel measure on $\R$ satisfying the growth condition
	\begin{equation}
		\int_{\mathbb{R}}\frac{1}{1+\lambda^2}\,\mathrm{d}\nu(\lambda)<\infty.
		\label{growth}
	\end{equation}
	Its (regularised) \emph{Borel transform} is the  function $B_\nu:\CC\setminus\operatorname{supp}(\nu)\rightarrow\CC$ acting as
	\begin{equation}
		B_\nu(z)=\int_\R\left(\frac{1}{\lambda-z}-\frac{\lambda}{1+\lambda^2}\right)\mathrm{d}\nu(\lambda).
		\label{eqn:regbor}
	\end{equation}
\end{definition}
\begin{remark}
	In the literature, the (standard) Borel transform of a Borel measure $\nu$ on $\R$ is often defined as follows:
	$$
	\tilde{B}_\nu(z)=\int_\R\frac{1}{\lambda-z}\,\mathrm{d}\nu(\lambda).
	$$
	However, this definition makes sense for a smaller class of measure, i.e.\  measures $\nu$ satisfying the growth condition
	\begin{equation}
	\int_\R\frac{1}{1+|\lambda|}\,\mathrm{d}\nu(\lambda)<\infty.
	\label{growth0}
	\end{equation}For such measures, $\tilde{B}_\nu(z)$ and $B_\nu(z)$ only differ by a finite real constant; this difference is in fact immaterial for our purposes, since, as we will show in the next proposition, $\nu$  depends only on the imaginary part of the boundary values of $B_\nu(z)$ on the real line, but the choice~\eqref{eqn:regbor} is more convenient since it is well-defined for a larger class of measures. Indeed, for a Friedrichs-Lee Hamiltonian with form factor $g$, we have
		\begin{itemize}
			\item $g\in\mathcal{H}\qquad\text{iff}\qquad \int_\mathbb{R}\mathrm{d}\kappa_g(\lambda)< \infty$ ;
			\item $g\in\mathcal{H}_{-1}\qquad\text{iff}\qquad\int_\mathbb{R}\frac{1}{1+|\lambda|}\,\mathrm{d}\kappa_g(\lambda) < \infty$;
			\item $g\in\mathcal{H}_{-2}\qquad\text{iff}\qquad\int_\mathbb{R}\frac{1}{1+\lambda^2}\,\mathrm{d}\kappa_g(\lambda)< \infty$,
		\end{itemize}
	and hence, in particular, the case $g\in\mathcal{H}_{-2}\setminus\mathcal{H}_{-1}$ corresponds to the case in which $\kappa_g$ does not admit a standard Borel transform (i.e.\  the bare self-energy $\tilde{\Sigma}_{g}(z)$), but does have a regularised Borel transform.
\end{remark}
\begin{proposition}
	Let $g \in \mathcal{H}_{-2}$. The self-energy $\Sigma_g$ of a Friedrichs-Lee Hamiltonian defined in Eq.~\eqref{self} has the following properties:
	\begin{itemize}
		\item [(i)] $\Sigma_g$ is analytic in $\mathbb{C}\setminus\operatorname{supp}(\kappa_g)$ and $\mathrm{Im}\,\Sigma_g(z)>0$ for all $z\in\mathbb{C}^+$, i.e.\ it is a Herglotz function~\cite{Gesztesy2};
		
		\item [(ii)] the coupling measure $\kappa_g$ in~\eqref{eqn:nug} can be uniquely reconstructed from the imaginary part of $\Sigma_g$ by Stieltjes' inversion formula:
		\begin{equation}
			\frac{1}{2}\Bigl(\kappa_g\bigl((\lambda_0,\lambda)\bigr) + \kappa_g\bigl([\lambda_0,\lambda]\bigr) \Bigr)=\frac{1}{\pi}\lim_{\delta\downarrow0}\int_{\lambda_0}^\lambda\operatorname{Im}\Sigma_g(\lambda+i\delta)\;\mathrm{d}\lambda.
	\label{eq:Stieltjes inversion}
		\end{equation}
		
		\item [(iii)] the boundary values of $\Sigma_g$ along $\operatorname{supp}(\kappa_g)$ are linked as follows to the Lebesgue decomposition of $\kappa_g$:
		\begin{enumerate}[$\bullet$]
			\item $\operatorname{supp}(\kappa_g)=\overline{\{\lambda\in\mathbb{R}\,|\,\operatorname{Im}\Sigma^+_g(\lambda)>0\}}$;

			\item $M_{\mathrm{ac}}=\{\lambda\in\operatorname{supp}(\kappa_g)\,|\,0<\operatorname{Im}\Sigma^+_g(\lambda)<\infty\}$ is a minimal support of $\kappa_g^{\mathrm{ac}}$, and the density of $\kappa_g^{\mathrm{ac}}$ is $\rho_g(\lambda)=\frac{1}{\pi}\operatorname{Im}\Sigma^+_g(\lambda)$;
			\item $M_{\mathrm{sc}} =\{\lambda\in\operatorname{supp}(\kappa_g)\,|\,\operatorname{Im}\Sigma^+_g(\lambda)=\infty,			\lim_{\delta\downarrow0}\delta\,\operatorname{Im}\Sigma_g(\lambda+i\delta)=0\}$
			is a minimal support of $\kappa_g^{\mathrm{sc}}$;
			\item  $\operatorname{supp}(\kappa_g^{\mathrm{pp}})=\overline{M}_{\mathrm{pp}}$, where
			$M_{\mathrm{pp}} = \{\lambda\in\operatorname{supp}(\kappa_g)\,|\,\operatorname{Im}\Sigma^+_g(\lambda)=\infty,\;\\
						\lim_{\delta\downarrow0}\delta\,\operatorname{Im}\Sigma_g(\lambda+i\delta)>0\}$, 
			and $\kappa_g(\{\lambda\})=\lim_{\delta\downarrow0}\delta\,\operatorname{Im}\Sigma_g(\lambda+i\delta)$,
			\end{enumerate}
			with $\Sigma^+_g$ defined in~\eqref{eq:Sigma+def}.
		
		\item [(iv)] Let $G_g(\lambda)$ be the function defined in~\eqref{eqn:gi}, then 
		\begin{enumerate}[$\bullet$]
			\item $\lim_{\delta\downarrow0}\frac{1}{\delta}\operatorname{Im}\Sigma_g(\lambda+i\delta)=G_g(\lambda)$;	
			\item in addition, if $G_g(\lambda)<\infty$, then $\Sigma^+_g(\lambda)$ is finite and real and $\Sigma_g(\lambda+i\delta)=\Sigma^+_g(\lambda)+i\delta G_g(\lambda)+o(\delta)$, as $\delta\downarrow0$,
			i.e.\  $G_g(\lambda)$ is the upper derivative of $\Sigma_g(z)$ in the direction of the imaginary axis.
			\end{enumerate}
		\end{itemize}
	\label{borel}
\end{proposition}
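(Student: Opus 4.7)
The plan is to reduce everything to the standard theory of (regularised) Borel transforms of positive Borel measures on $\mathbb{R}$ satisfying the growth condition \eqref{growth}. The key identification to make first is that, writing $|g|^2\,\mathrm{d}\mu$ as a measure on $X$ and pushing it forward under $\omega\colon X\to\mathbb{R}$, the coupling measure $\kappa_g$ defined in \eqref{eqn:nug} is exactly the image measure. Applying the change of variables to the integral representation of $\Sigma_g$ in \eqref{self}, one obtains
\begin{equation}
\Sigma_g(z)=\int_{\mathbb{R}}\left(\frac{1}{\lambda-z}-\frac{\lambda}{\lambda^2+1}\right)\mathrm{d}\kappa_g(\lambda)=B_{\kappa_g}(z),
\end{equation}
so that $\Sigma_g$ coincides with the regularised Borel transform of $\kappa_g$. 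The condition $g\in\mathcal{H}_{-2}$ is precisely $\int(1+\lambda^2)^{-1}\,\mathrm{d}\kappa_g(\lambda)<\infty$, so \eqref{growth} holds and $B_{\kappa_g}$ is well defined.

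Once this identification is made, parts (i) and (ii) are classical. For (i), analyticity on $\mathbb{C}\setminus\operatorname{supp}(\kappa_g)$ follows by differentiating under the integral sign (dominated convergence with the bound $|(\lambda-z)^{-1}-\lambda/(\lambda^2+1)|\le C(z)(1+\lambda^2)^{-1/2}\cdot(1+\lambda^2)^{-1/2}$), and $\operatorname{Im}\Sigma_g(z)=\operatorname{Im} z\cdot\int|\lambda-z|^{-2}\,\mathrm{d}\kappa_g(\lambda)>0$ for $z\in\mathbb{C}^+$. For (ii), the Stieltjes inversion formula \eqref{eq:Stieltjes inversion} is the standard Poisson-kernel recovery of a measure from its Borel transform; the regularising term $\lambda/(\lambda^2+1)$ is real and drops out of $\operatorname{Im}\Sigma_g$.

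For (iii), I would invoke the standard Herglotz/Aronszajn–Donoghue analysis of boundary values. Specifically, the density of $\kappa_g^{\mathrm{ac}}$ with respect to Lebesgue measure equals $\pi^{-1}\operatorname{Im}\Sigma_g^+$ by the Lebesgue differentiation theorem applied to \eqref{eq:Stieltjes inversion}, and its essential support is $\{\operatorname{Im}\Sigma_g^+\in(0,\infty)\}$; the total support of $\kappa_g$ is the closure of $\{\operatorname{Im}\Sigma_g^+>0\}$ since the Poisson integral of a positive measure is strictly positive exactly off the complement of its support. The atomic part is extracted via $\kappa_g(\{\lambda\})=\lim_{\delta\downarrow0}\delta\,\operatorname{Im}\Sigma_g(\lambda+i\delta)$ by direct computation with the Poisson kernel (the contribution of $\{\lambda\}$ gives $\delta\cdot\delta^{-2}\cdot\kappa_g(\{\lambda\})\cdot\delta=\kappa_g(\{\lambda\})$, and the rest vanishes), which yields the characterisations of $M_{\mathrm{pp}}$ and $M_{\mathrm{sc}}$.

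Part (iv) is a direct computation: $\delta^{-1}\operatorname{Im}\Sigma_g(\lambda+i\delta)=\int[(\lambda-\lambda')^2+\delta^2]^{-1}\,\mathrm{d}\kappa_g(\lambda')$, which increases monotonically as $\delta\downarrow0$ to $G_g(\lambda)$, so monotone convergence gives the first bullet. For the second bullet, assume $G_g(\lambda)<\infty$. Then $(\lambda-\lambda')^{-1}\in L^2(\mathrm{d}\kappa_g)$ and, in particular, $(\lambda-\lambda')^{-1}-\lambda'/(\lambda'^2+1)\in L^1(\mathrm{d}\kappa_g)$ (use Cauchy–Schwarz together with $\int(1+\lambda'^2)^{-1}\,\mathrm{d}\kappa_g<\infty$), so $\Sigma_g^+(\lambda)$ is finite and real; expanding
\begin{equation}
\Sigma_g(\lambda+i\delta)-\Sigma_g^+(\lambda)=\int\frac{i\delta}{(\lambda-\lambda'-i\delta)(\lambda-\lambda')}\,\mathrm{d}\kappa_g(\lambda')
\end{equation}
and applying dominated convergence (with integrable dominant $|\lambda-\lambda'|^{-2}$) yields the claimed expansion $\Sigma_g(\lambda+i\delta)=\Sigma_g^+(\lambda)+i\delta G_g(\lambda)+o(\delta)$.

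The main delicate point is part (iii), specifically obtaining the characterisation of $M_{\mathrm{sc}}$ and $M_{\mathrm{pp}}$ as essential (rather than mere) supports; this requires the Besicovitch–de la Vallée-Poussin type differentiation theorems for Borel measures, rather than a purely direct computation, and is where I would rely most heavily on the standard Herglotz-function literature (e.g.\ \cite{Gesztesy2}) to keep the argument short.
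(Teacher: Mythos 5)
Your proposal is correct and takes essentially the same route as the paper: the paper's entire proof consists of the identification $\Sigma_g(z)=B_{\kappa_g}(z)$ (i.e.\ the self-energy is the regularised Borel transform of the pushforward measure $\kappa_g$, well defined because $g\in\mathcal{H}_{-2}$ is exactly the growth condition) followed by an appeal to the standard Herglotz/Borel-transform theory of \cite{Gesztesy2}, which is precisely your reduction. The additional verifications you supply for (i), (ii) and (iv) are sound (up to a harmless sign slip, $(\lambda-\lambda'-i\delta)$ in place of $(\lambda-\lambda'+i\delta)$, in the displayed difference formula, which does not affect the limit), and for the delicate minimal-support statements in (iii) you rely on the same classical literature as the paper does.
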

\begin{proof}
			By ~\eqref{eqn:nug}, the self-energy of the Friedrichs-Lee Hamiltonian $H_{g,\e}$ in Eq.~\eqref{self} can be written as
		\begin{equation}
		\Sigma_g(z)=\int_\R\left(\frac{1}{\lambda-z}-\frac{\lambda}{1+\lambda^2}\right)\mathrm{d}\kappa_g(\lambda).
		\end{equation}
	Therefore, $\Sigma_g(z)=B_{\kappa_g}(z)$, i.e.\  the self-energy is the regularised Borel transform of $\kappa_g$, and the properties $(i)$--$(iv)$ follow from the general theory of Borel transform: see e.g.~\cite{Gesztesy2} and references therein.
\end{proof}

\subsection{Cyclic subspaces and spectral properties}
In the previous subsection we have shown the link between the self-energy $\Sigma_g$ and the properties of $\kappa_g$; now we will link the latter with the spectral properties of the Friedrichs-Lee Hamiltonian. We will start from some basic definitions.
\begin{definition}
	Let $T$ be a self-adjoint operator on a Hilbert space $\mathcal{K}$, and let $\varphi\in\mathcal{K}_{-2}$. The cyclic subspace $\mathcal{K}_\varphi$ spanned by $\varphi$ is defined as follows:
	\begin{equation}
	\mathcal{K}_\varphi=\overline{\mathrm{Span}\left\{\frac{1}{T-z}\varphi\;\Bigg|\; z\in\mathbb{C}\setminus\mathbb{R}\right\}}.
	\label{eq:cyclics}
	\end{equation}
	In particular, if $\mathcal{K}_\varphi=\mathcal{K}$, $\varphi$ is called a cyclic vector and $T$ is said to have a simple spectrum.
\end{definition}
\begin{proposition}[\cite{DeOliveira,SimonRankOne2}]\label{propcyc}
	Let $T$ be a self-adjoint operator on a Hilbert space $\mathcal{K}$, and let $\varphi\in\mathcal{K}_{-2}$. The following properties hold:
	\begin{itemize}
		\item $\mathcal{K}_\varphi$ is a reducing subspace for $T$, and hence $T=T|_{\mathcal{K}_\varphi}\oplus T|_{\mathcal{K}_\varphi^\perp}$;
		\item $T|_{\mathcal{K}_\varphi}$ is unitarily equivalent to the position operator (i.e.\ the multiplication operator by the identity function) on the Hilbert space $L^2(\mathbb{R},\nu_\varphi)$, where $\nu_\varphi$ is the spectral measure of $T$ at $\varphi$, defined by $\nu_\varphi(B) = \braket{\varphi | E_T (B) \varphi}$, for all bounded Borel sets $B \subset \mathbb{R}$, with $E_T (\cdot)$ being the spectral projection of $T$.	
	 \item 
	$\sigma\left(T|_{\mathcal{K}_\varphi}\right)=\operatorname{supp}(\nu_\varphi)$ and  $\sigma_{j}\left(T|_{\mathcal{K}_\varphi}\right)=\operatorname{supp}(\nu^{j}_\varphi)$, for $j\in\{\mathrm{ac},\,\mathrm{sc},\,\mathrm{pp}\}$.
	\end{itemize}
\end{proposition}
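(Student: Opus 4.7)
The plan is to first establish that $\mathcal{K}_\varphi$ reduces $T$, then construct an explicit unitary equivalence between $T|_{\mathcal{K}_\varphi}$ and multiplication by $\lambda$ on $L^2(\mathbb{R},\nu_\varphi)$, and finally read off the spectral statements from the standard description of multiplication operators. The nonstandard ingredient is that $\varphi$ lies in $\mathcal{K}_{-2}$ rather than in $\mathcal{K}$, so $\nu_\varphi$ is only $\sigma$-finite and the density step requires the regularised Borel transform~\eqref{eqn:regbor}.

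For the reducing property, the first resolvent identity
\begin{equation*}
(T-w)^{-1}(T-z)^{-1}\varphi=\frac{1}{z-w}\Bigl[(T-w)^{-1}\varphi-(T-z)^{-1}\varphi\Bigr]
\end{equation*}
shows that the closed span $\mathcal{K}_\varphi$ in~\eqref{eq:cyclics} is invariant under every bounded resolvent $(T-w)^{-1}$, $w\in\mathbb{C}\setminus\mathbb{R}$. These resolvents generate the bounded Borel functional calculus of $T$ (via Stone--Weierstrass applied to the Cayley transform), so $\mathcal{K}_\varphi$ is invariant under all spectral projections $E_T(B)$, which is the usual characterisation of a reducing subspace.

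For the unitary equivalence, I would work in a spectral representation of $T$, in which $\varphi$ corresponds to a measurable function $f$ with $\int|f|^2(1+\lambda^2)^{-1}\,\mathrm{d}\nu_T<\infty$, and put $\nu_\varphi(B):=\int_B|f|^2\,\mathrm{d}\nu_T$. This is a positive $\sigma$-finite Borel measure satisfying the growth condition~\eqref{growth} and agreeing with $\braket{\varphi|E_T(B)\varphi}$ on bounded Borel $B$ (where $E_T(B)\varphi$ makes sense as an element of $\mathcal{K}$). On the generators in~\eqref{eq:cyclics} I would set
\begin{equation*}
U\,\frac{1}{T-z}\varphi=\frac{1}{\lambda-z},
\end{equation*}
check isometry by computing both inner products via the spectral theorem,
\begin{equation*}
\left\langle\tfrac{1}{T-z}\varphi\,\bigg|\,\tfrac{1}{T-w}\varphi\right\rangle=\int_{\mathbb{R}}\frac{\mathrm{d}\nu_\varphi(\lambda)}{(\lambda-\bar z)(\lambda-w)}=\left\langle\tfrac{1}{\lambda-z}\,\bigg|\,\tfrac{1}{\lambda-w}\right\rangle_{L^2(\nu_\varphi)},
\end{equation*}
and extend by continuity to the whole of $\mathcal{K}_\varphi$. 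The intertwining $U(T-w)^{-1}=(M_\lambda-w)^{-1}U$ is immediate from the resolvent identity applied on each side, and upgrading it from the resolvents to $T$ itself is a further appeal to the functional calculus.

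Once the unitary equivalence is in place, the final bullet collapses to the standard description of multiplication by $\lambda$ on $L^2(\mathbb{R},\nu_\varphi)$: $\sigma(M_\lambda)=\operatorname{supp}(\nu_\varphi)$, and the Lebesgue decomposition $\nu_\varphi=\nu_\varphi^{\mathrm{ac}}+\nu_\varphi^{\mathrm{sc}}+\nu_\varphi^{\mathrm{pp}}$ induces an orthogonal decomposition of $L^2(\mathbb{R},\nu_\varphi)$ into reducing subspaces on which $M_\lambda$ has purely ac, sc, and pp spectrum, each with spectrum equal to the support of the corresponding piece. The point I expect to be hardest is the density of $\operatorname{span}\{(\lambda-z)^{-1}:z\in\mathbb{C}\setminus\mathbb{R}\}$ in $L^2(\mathbb{R},\nu_\varphi)$ when $\nu_\varphi$ has infinite total mass: if $\eta\in L^2(\nu_\varphi)$ is orthogonal to every generator, one would convert this into the vanishing of the regularised Borel transform~\eqref{eqn:regbor} of the signed object $\overline{\eta}\,\mathrm{d}\nu_\varphi$ (absolutely convergent by Cauchy--Schwarz against $(1+\lambda^2)^{-1}$), and then apply Stieltjes inversion in the form of Proposition~\ref{borel} to conclude that $\eta=0$ $\nu_\varphi$-a.e.
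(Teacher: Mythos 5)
Your argument is correct, and it is essentially the standard proof of this statement: the paper itself gives no proof of Proposition~\ref{propcyc}, quoting it from the cited references, and your route (invariance under resolvents $\Rightarrow$ reducing, the isometry $U(T-z)^{-1}\varphi=(\lambda-z)^{-1}$ extended by continuity, and the multiplication-operator description of the spectra) is precisely the argument found there, including the correct treatment of the $\varphi\in\mathcal{K}_{-2}$ case. Your handling of the only delicate point---density of the span of $(\lambda-z)^{-1}$ in $L^2(\mathbb{R},\nu_\varphi)$ when $\nu_\varphi$ is infinite, via the vanishing Borel transform of $\overline{\eta}\,\mathrm{d}\nu_\varphi$ (which by Cauchy--Schwarz and the growth condition~\eqref{growth} is a measure with a well-defined transform) and Stieltjes inversion applied by linearity to its Herglotz pieces---is sound.
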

\begin{remark}
	In our case, $\mathcal{K}=\mathcal{H}=L^2(X,\mu)$, $T=\Omega$ and $\varphi = g$. The cyclic subspace $\mathcal{H}_g$  spanned by $g$, as a consequence of Stone-Weierstrass theorem, is equal to  $L^2(X_g,\mu)$, with $X_g$ given by Eq.~\eqref{eqn:xg}; this justifies the use of the notation $\mathcal{H}_g=L^2(X_g,\mu)$ in Theorem~\ref{thm:spe}. In particular, $\Omega|_{\mathcal{H}_g}$ and $\Omega|_{\mathcal{H}_g^\perp}$ are the multiplication operators by the restriction of $\omega$ to $X_g$ and $X\setminus X_g$, respectively. Finally, $g$ is cyclic iff its support is a set of full measure $\mu$. 
\end{remark}
\begin{remark}
	As a consequence of Propositions~\ref{borel} and~\ref{propcyc}, the spectrum of $\Omega|_{\mathcal{H}_g}$ and its decomposition can be studied by analyzing the boundary values of the (regularised) Borel transform~\eqref{eqn:regbor} of the coupling measure $\kappa_g$.
\end{remark}
\begin{proposition}
	\label{propcyc2}
	Let $\e \in \R$, $g \in \mathcal{H}_{-2}$, $g \neq 0$, and let $H_{g,\e}$ be the corresponding Friedrichs-Lee Hamiltonian~\eqref{domain}--\eqref{action}. Then:
	\begin{enumerate}
		\item [(i)] the cyclic subspace of $\CC\oplus\mathcal{H}$ spanned by $\Psi_0$ in Eq.~\eqref{vacuum} is $\mathbb{C}\oplus\mathcal{H}_g$, with $\mathcal{H}_g$ being the cyclic subspace of $\mathcal{H}$ spanned by $g$. In particular, if $g$ is cyclic in $\mathcal{H}$, $\Psi_0$ is cyclic in $\CC\oplus\mathcal{H}$.
		\item [(ii)] $H_{g,\e}|_{(\CC\oplus\mathcal{H}_g)^\perp}=H_{0,\e}|_{(\CC\oplus\mathcal{H}_g)^\perp}$, and hence the spectrum of $H_{g,\e}|_{(\CC\oplus\mathcal{H}_g)^\perp}$ is the same as that of the uncoupled case.
	\end{enumerate}
\end{proposition}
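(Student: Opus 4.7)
The plan is to use the explicit resolvent formula of Theorem~\ref{thm:singmodel} for part (i), and a direct inspection of the action of $H_{g,\e}$ for part (ii).

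For part (i), substituting $x=1,\,\xi=0$ in~\eqref{resolvent00} yields, for every $z\in\CC\setminus\R$,
\begin{equation}
\bigl(\e-z-\Sigma_g(z)\bigr)\frac{1}{H_{g,\e}-z}\Psi_0=\Psi_0-\begin{pmatrix}0\\\frac{1}{\Omega-z}g\end{pmatrix}.
\end{equation}
Since $\frac{1}{\Omega-z}g\in\mathcal{H}_g$ by definition~\eqref{eq:cyclics}, the right-hand side lies in $\CC\oplus\mathcal{H}_g$, so $\mathcal{K}_{\Psi_0}\subseteq\CC\oplus\mathcal{H}_g$. For the reverse inclusion I would first take $z=iy$ with $y\to\infty$: the identity $\|\frac{1}{\Omega-iy}g\|^2=\int|g(k)|^2/(\omega(k)^2+y^2)\,\mathrm{d}\mu(k)$, with integrand dominated by $|g|^2/(\omega^2+1)\in L^1(X,\mu)$ (since $g\in\mathcal{H}_{-2}$), shows by dominated convergence that the second component on the right vanishes in norm, whence $\Psi_0\in\mathcal{K}_{\Psi_0}$. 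Subtracting $\Psi_0$ from the displayed identity then isolates each $\begin{pmatrix}0\\\frac{1}{\Omega-z}g\end{pmatrix}$ inside $\mathcal{K}_{\Psi_0}$; their closed linear span is $\{0\}\oplus\mathcal{H}_g$ by~\eqref{eq:cyclics}, yielding the opposite inclusion.

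For part (ii), by~(i) and Proposition~\ref{propcyc} applied to $H_{g,\e}$, the subspace $\CC\oplus\mathcal{H}_g=\mathcal{K}_{\Psi_0}$ is reducing for $H_{g,\e}$, hence so is $(\CC\oplus\mathcal{H}_g)^\perp=\{0\}\oplus\mathcal{H}_g^\perp$. For a generic $\Phi=\begin{pmatrix}0\\\eta\end{pmatrix}$ with $\eta\in\mathcal{H}_g^\perp$, the parametrisation~\eqref{domain} forces $x=0,\;\xi=\eta$, so $\Phi\in D(H_{g,\e})$ iff $\eta\in\mathcal{H}_2$, which coincides with $D(H_{0,\e})\cap(\{0\}\oplus\mathcal{H}_g^\perp)$. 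Applying~\eqref{action} then gives $H_{g,\e}\Phi=\begin{pmatrix}\braket{g|\eta}\\\Omega\eta\end{pmatrix}$, whereas $H_{0,\e}\Phi=\begin{pmatrix}0\\\Omega\eta\end{pmatrix}$, so the claim reduces to showing $\braket{g|\eta}=0$. The main (mild) technical point is handling this pairing between $\mathcal{H}_{-2}$ and $\mathcal{H}_2$ in the genuinely singular regime: I would first observe that $g$, as a $\mu$-measurable function on $X$, vanishes $\mu$-a.e.\ on $X\setminus X_g$, since $\int_{X\setminus X_g}|g|^2\,\mathrm{d}\mu=\kappa_g\bigl(\R\setminus\operatorname{supp}(\kappa_g)\bigr)=0$ by~\eqref{eqn:nug}. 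Rewriting the pairing as the ordinary inner product $\braket{(\Omega^2+1)^{-1/2}g\,|\,(\Omega^2+1)^{1/2}\eta}_{\mathcal{H}}$ in $\mathcal{H}$, the first factor is supported on $X_g$ by the observation just made, while the second, obtained from $\eta$ via a Borel function of $\Omega$, remains in $\mathcal{H}_g^\perp$ and is supported on $X\setminus X_g$; the inner product therefore vanishes, concluding the argument.
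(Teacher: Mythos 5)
Your part (i) is correct and rests on the same computation as the paper's proof; the paper organises it by characterising $\mathcal{K}_{\Psi_0}^\perp$, while you prove the two inclusions directly, and your dominated-convergence step showing $\|(\Omega-iy)^{-1}g\|\to 0$ as $y\to\infty$ (so that $\Psi_0\in\mathcal{K}_{\Psi_0}$) supplies a detail the paper leaves implicit. In part (ii), the reduction of the claim to the identity $\braket{g|\eta}=0$ for $\eta\in\mathcal{H}_g^\perp\cap\mathcal{H}_2$ is also correct, but your justification of that identity contains a genuine flaw: you assert that every $\eta\in\mathcal{H}_g^\perp$ is supported on $X\setminus X_g$. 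This is equivalent to the inclusion $L^2(X_g,\mu)\subseteq\mathcal{H}_g$, and only the reverse inclusion $\mathcal{H}_g\subseteq L^2(X_g,\mu)$ holds in general. A counterexample: let $X=\{1,2\}$ with counting measure, $\omega\equiv 0$, $g=\delta_1$; then $\kappa_g=\delta_0$, so $X_g=\omega^{-1}(\{0\})=X$, while $\mathcal{H}_g=\operatorname{span}\{\delta_1\}$ and $\eta=\delta_2\in\mathcal{H}_g^\perp$ is certainly not supported on $X\setminus X_g=\emptyset$. Whenever $\omega$ is not injective (modulo null sets) relative to the set where $g$ vanishes, the support claim breaks down, and the Proposition is stated for arbitrary $(X,\mu)$, $\omega$, $g$.

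The conclusion you need is nonetheless true, and your argument can be repaired in either of two short ways. (a) Observe that $(\Omega^2+1)^{-1}g=\frac{1}{2i}\bigl[(\Omega-i)^{-1}-(\Omega+i)^{-1}\bigr]g$ lies in $\mathcal{H}_g\cap\mathcal{H}_2$, and since $\mathcal{H}_g$ reduces $\Omega$ it is stable under Borel functions of $\Omega$ (on the appropriate domains); hence $(\Omega^2+1)^{-1/2}g=(\Omega^2+1)^{1/2}(\Omega^2+1)^{-1}g\in\mathcal{H}_g$. Your rewriting of the pairing then exhibits $\braket{g|\eta}$ as the inner product of a vector of $\mathcal{H}_g$ with a vector of $\mathcal{H}_g^\perp$, which vanishes with no support considerations at all. (b) Alternatively, use the same limit as in your part (i): $-iy(\Omega-iy)^{-1}g\to g$ in $\mathcal{H}_{-2}$ as $y\to\infty$ by dominated convergence, and the pairing between $\mathcal{H}_{-2}$ and $\mathcal{H}_2$ is continuous, so $\braket{g|\eta}=\lim_{y\to\infty}\overline{(-iy)}\,\braket{(\Omega-iy)^{-1}g|\eta}=0$ because $\eta\perp\mathcal{H}_g$. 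With either repair your proof is complete; for comparison, the paper's own proof of (ii) simply states that $H_{g,\e}$ acts as $\Omega$ on vectors of the form $(0,\xi)$ with $\xi\in\mathcal{H}_g^\perp\cap D(\Omega)$ without detailing this pairing point, and the identification $\mathcal{H}_g=L^2(X_g,\mu)$ invoked in the surrounding remarks is subject to the same caveat about non-injective dispersion relations.
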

\begin{proof}
Let $\mathcal{K}= \CC \oplus \mathcal{H}$. By definition, the cyclic subspace $\mathcal{K}_{\Psi_0}$ spanned by $\Psi_0$ is
\begin{equation}
\mathcal{K}_{\Psi_0}=\overline{\mathrm{Span}\left\{\frac{1}{H_{g,\e}-z}\Psi_0\;\bigg|\; z\in\mathbb{C}\setminus\mathbb{R}\right\}},
\end{equation}
where, by Eq.~\eqref{resolvent00}, for any $z\in\CC\setminus\R$ we have
\begin{equation}
	\frac{1}{H_{g,\e}-z}\Psi_0=\frac{1}{\e-z-\Sigma_g(z)}\begin{pmatrix}
1\\-\frac{1}{\Omega-z}g
	\end{pmatrix}.
	\label{eq:43}
\end{equation}
To prove that $\mathcal{K}_{\Psi_0}=\CC\oplus\mathcal{H}_g$, we will prove the equivalent equality $\mathcal{K}^\perp_{\Psi_0}=\left(\CC\oplus\mathcal{H}_g\right)^\perp=\{0\}\oplus\mathcal{H}_g^\perp$. 

Let $\Psi=\begin{pmatrix}
	x\\\xi
	\end{pmatrix}$ be orthogonal to vectors of the form~\eqref{eq:43}, i.e., such that
	\begin{equation}
	\frac{1}{\varepsilon-z-\Sigma_g(z)}\left[x-\Braket{\xi\,\bigg|\frac{1}{\Omega-z}g}\right]=0, \qquad\forall z\in\CC\setminus\R.
	\end{equation}
	This implies that $x=\Braket{\xi\,\Big|\frac{1}{\Omega-z}g}$ for all $z\in\CC\setminus\R$, and hence necessarily that
	\begin{equation}
	x=0,\qquad \Braket{\xi\,\bigg|\frac{1}{\Omega-z}g}=0\quad\forall z\in\CC\setminus\R,
	\end{equation}
	the second equality meaning that $\xi\in\mathcal{H}_g^\perp$. This proves $(i)$. Besides, by the expression of the Hamiltonian in Eq.~\eqref{action}, for any $\xi\in\mathcal{H}_g^\perp \cap D(\Omega)$ we have
\begin{equation}
	H_{g,\e}\begin{pmatrix}
	0\\\xi
	\end{pmatrix}=\begin{pmatrix}
	0\\\Omega\xi
	\end{pmatrix},
\end{equation}
so that $\{0\}\oplus\mathcal{H}_g^\perp$ is invariant under $H_{g,\e}$ (which is true in general for any cyclic subspace) and, in particular, the action of $H_{g,\e}$ on it is independent of $g$. This proves $(ii)$.
\end{proof}

\subsection{Proof of Theorem~\ref{thm:spe}}
\label{sec:spe}
We are now ready to prove Theorem~\ref{thm:spe}.
\begin{proof}
	By Proposition~\ref{propcyc2} we know that $H_{g,\e}|_{(\CC\oplus\mathcal{H}_g)^\perp}=H_{0,\e}|_{(\CC\oplus\mathcal{H}_g)^\perp}$, so we must only find the spectrum of the restriction of $H_{g,\e}$ to the cyclic subspace $\CC\oplus\mathcal{H}_g$. To simplify the notation, without loss of generality let us suppose $g$ cyclic for $\Omega$; then, Proposition~\ref{propcyc2} implies that $\Psi_0$ is cyclic for $H_{g,\e}$. Since both operators, $\Omega$ and $H_{g,\e}$, are self-adjoint on their Hilbert spaces, by Proposition~\ref{propcyc} the two operators are equivalent to the position operators on $L^2(\mathbb{R},\kappa_g)$ and $L^2(\mathbb{R},\nu_{\Psi_0})$, respectively, with $\nu_{\Psi_0}$ being the spectral measure of $H_{g,\e}$ at $\Psi_0$; finally, by Proposition~\ref{borel}, the $\mathrm{ac}, \mathrm{sc}, \mathrm{pp}$ components of the maximal spectral measures $\kappa_g$, $\nu_{\Psi_0}$ of both operators can be inferred by the boundary values of the imaginary parts of their Borel transforms.
	
	Now, the (regularised) Borel transform of $\kappa_g$ is the self-energy $\Sigma_g(z)$. The (standard) Borel transform of $\nu_{\Psi_0}$ is defined as follows:
	\begin{equation}
		\Pi_g(z)=\left\langle\Psi_0\,\bigg|\frac{1}{H_{g,\e}-z}\Psi_0\right\rangle,
	\end{equation}
	where we dropped the regularising term, which is not needed because $\Psi_0\in\CC\oplus\mathcal{H}$ and hence $\nu_{\Psi_0}$ is a finite measure. A straightforward calculation yields
	\begin{equation}
		\Pi_g(z)=\frac{1}{\e-z-\Sigma_g(z)},
	\end{equation}
	and hence, for any $\lambda\in\mathbb{R}$ and $\delta>0$,
	\begin{equation}
\operatorname{Im}\Pi_g(\lambda+i\delta)\sim\frac{\operatorname{Im}\Sigma_g(\lambda+i\delta)}{|\e-\lambda-\Sigma_g(\lambda+i\delta)|^2}, \quad \text{as } \delta\downarrow0.
\label{transforms}
	\end{equation}
	
	First of all, let us examine the absolutely continuous components. By Eq.~\eqref{transforms}, $\mathrm{Im}\,\Sigma_g(\lambda+i\delta)$ has a finite nonzero limit if and only if $\mathrm{Im}\,\Pi_g(\lambda+i\delta)$ has a finite nonzero limit; by Proposition~\ref{borel}, this proves that $\kappa^{\mathrm{ac}}_g$ and $\nu^{\mathrm{ac}}_{\Psi_0}$ are minimally supported on (Lebesgue-almost everywhere) equal sets; since they are absolutely continuous measures, this means that their densities are supported on (Lebesgue-a.e.) equal sets and hence the position operators on the corresponding $L^2$ spaces are unitarily equivalent. This proves the equality between the absolutely continuous spectra of $\Omega$ and $H_{g,\e}$.	
	
	Now let $\lambda$ be in a minimal support of the singular spectrum of $H_{g,\e}$; by Prop.~\ref{borel} and Eq.~\eqref{transforms} this happens iff $\e-\lambda=\Sigma^+_g(\lambda)$ (since, if $\operatorname{Im}\Sigma_g(\lambda+i\delta)$ diverges as $\delta\downarrow0$, the denominator diverges faster), also implying that $\Sigma^+_g(\lambda)$ is real. To distinguish between the pure point and singular continuous components, we must examine the limiting value of $\delta\,\Pi_g(\lambda+i\delta)$ as $\delta\downarrow0$. We have 
	\begin{equation}
	\delta\,\operatorname{Im}\Pi_g(\lambda+i\delta)\sim\frac{\frac{1}{\delta}\operatorname{Im}\Sigma_g(\lambda+i\delta)}{\frac{1}{\delta^2}\left(\e-\lambda-\operatorname{Re}\Sigma_g(\lambda+i\delta)\right)^2+\frac{1}{\delta^2}\operatorname{Im}\Sigma_g(\lambda+i\delta)^2}.
	\label{delta}
	\end{equation}
	Now, when $G_g(\lambda)=\infty$, the denominator in~\eqref{delta} diverges faster than the numerator and hence $\delta\,\Pi_g(\lambda+i\delta)\to0$ as $\delta\downarrow0$. Besides, when $G_g(\lambda)<\infty$, the first term in the denominator in~\eqref{delta} vanishes since $\operatorname{Re}\Sigma_g(\lambda+i\delta)\sim\e-\lambda+o(\delta)$ and hence $\left(\e-\lambda-\operatorname{Re}\Sigma_g(\lambda+i\delta)\right)^2\sim o(\delta^2)$; hence we are left with 
\begin{equation}
\delta\,\operatorname{Im}\Pi_g(\lambda+i\delta)\sim\frac{\delta}{\operatorname{Im}\Sigma_g(\lambda+i\delta)}\to\frac{1}{G_g(\lambda)}>0, \quad \text{as } \delta\downarrow0.
\end{equation}

As a result, the singular continuous component of $\nu_{\Psi_0}$ is minimally supported on the set
\begin{equation}
\left\{\lambda\in \R\,\big|\,\varepsilon-\lambda=\Sigma^+_g(\lambda),\,G_g(\lambda)=\infty\right\},
\end{equation}
and the support $\operatorname{supp}(\nu^{\mathrm{pp}}_{\Psi_0})$ of the pure point component of $\nu_{\Psi_0}$ is
\begin{equation}
\operatorname{supp}(\nu^{\mathrm{pp}}_{\Psi_0})=\overline{\left\{\lambda\in \R\,\big|\,\varepsilon-\lambda=\Sigma^+_g(\lambda),\,G_g(\lambda)<\infty\right\}}.
\end{equation}
\end{proof}
\begin{remark}
Notice that $\operatorname{Im}\Sigma_g^+(\lambda)=\infty$ iff $\lim_{\delta\downarrow0}\operatorname{Im}\Pi_g(\lambda+i\delta)=0$. In the case in which the singular spectra of both $\Omega$ and $H_{g,\e}$ are purely discrete, this means that inside the support of the spectral measure, the eigenvalues of $\Omega$ and $H_{g,\e}$ are completely disjoint: physically, no stable state of the field with energy coupled to the atom preserves its stability.

As for the energy $\e$ of the excited atom, which is an eigenvalue of the uncoupled operator:
\begin{itemize}
\item if $\e\notin\operatorname{supp}(\kappa_g)$, it will also be an eigenvalue of the coupled operator;
\item if $\e\in\operatorname{supp}(\kappa_g)$, then  it is an eigenvalue of the coupled operator if it satisfies the equation
\begin{equation}
\Sigma_g^+(\e)=0.
\end{equation}
In particular, if the spectral density of the absolutely continuous part of $\kappa_g$ is continuous, it must vanish at $\e$ .
\end{itemize}

\end{remark}

\begin{remark}
The eigensystem for the Friedrichs-Lee Hamiltonian can be solved explicitly. Let $\lambda\in\mathbb{R}$, $x\in\mathbb{C}$ and $\xi\in D(\Omega)$ such that
\begin{equation}\label{eq:eigen}
	H_{g,\e}\begin{pmatrix}
	x\\\xi-x\frac{\Omega}{\Omega^2+1}g
	\end{pmatrix}=\lambda\begin{pmatrix}
	x\\\xi-x\frac{\Omega}{\Omega^2+1}g
	\end{pmatrix};
\end{equation}
a direct calculation shows that $\lambda$ must solve Eq.~\eqref{eigeneq} and
\begin{equation}
\xi(k)=-x\left(\frac{1}{\omega(k)-\lambda}-\frac{\omega(k)}{\omega(k)^2+1}\right)g(k),
\end{equation}
where the condition on $\lambda$ ensures $\xi$ to be square-integrable.

\end{remark}

\begin{remark}
\label{rmk:couplmeas}
An important consequence of Theorem~\ref{thm:spe} is that, modulo a possible uncoupled part,  the spectrum of the Friedrichs-Lee Hamiltonian with a given $\e$ depends entirely on the coupling measure $\kappa_g$, or equivalently on the self-energy $\Sigma_g(z)$ which can be always reconstructed from the coupling measure through the inversion formula~\eqref{eq:Stieltjes inversion}.
Different choices of the momentum space $(X,\mu)$, of the dispersion relation $\omega$ and of the form factor $g$, but yielding the same $\kappa_g$ and the same uncoupled part, are fully equivalent at the spectral level. 

To this respect it is worth noticing that \emph{every} Borel measure on $\R$ satisfying the growth condition~\eqref{growth} can be obtained as the coupling measure of a Friedrichs-Lee Hamiltonian. Indeed, if $\nu$ is the desired coupling measure, just choose  $(X,\mu)=(\R,\nu)$,  $\omega(k)=k$ and $g(k)=1$, for all $k \in \R$. Then one immediately obtains
\begin{equation}
	\kappa_g(B)=\int_{\omega^{-1}(B)}|g(k)|^2\,\mathrm{d}\mu(k)=\int_B\mathrm{d}\nu(\lambda)=\nu(B),
\end{equation}
for any Borel set $B \subset \R$.
\end{remark}

\section{Some examples}  \label{sect:4}
In this section we discuss the spectral properties of some interesting examples of Friedrichs-Lee models. The section is organised as follows:
\begin{itemize}
\item Examples~\ref{ex1}--\ref{ex3} concern the case in which $\sigma(\Omega)$ is purely absolutely continuous;
\item Example~\ref{ex4} explores a purely discrete $\sigma(\Omega)$;
\item Finally, in Example~\ref{ex5} we investigate a pure point $\sigma(\Omega)$, with dense eigenvalues in $[0,1]$, which becomes singular continuous when the coupling is switched on.
\end{itemize}
Some considerations are now in order. Suppose that $\sigma(\Omega)$ is purely absolutely continuous in some (possibly unbounded) closed interval $J\subset\mathbb{R}$, and hence
\begin{itemize}
	\item $\sigma_{\mathrm{ac}}(H_{0,\varepsilon})=J$;
	\item $\sigma_{\mathrm{pp}}(H_{0,\varepsilon})=\{\varepsilon\}$,
\end{itemize}
where $\varepsilon$ can also be in $J$. Now, switching on a coupling $g$, we will still have $\sigma_{\mathrm{ac}}(H_{g,\e})=J$. In particular, $\varepsilon$ is again in the absolutely continuous spectrum of $H_{g,\varepsilon}$  if and only if $\varepsilon \in J$, but in general it will not  be in $\sigma_{\mathrm{pp}}(H_{g,\varepsilon})$.  However, $\varepsilon\in\sigma_{\mathrm{pp}}(H_{g,\varepsilon})$ if and only if $\Sigma^+_g(\varepsilon)=0$, i.e.\  if $\varepsilon$ is a zero for the  density (assumed to be continuous) of $\kappa_g$ and in addition $\operatorname{Re}\Sigma_g^+(\varepsilon)=0$. Physically, the eigenvalue becomes unstable whenever it lies in a set of coupled values of energy, except when the coupling density vanishes at that point. Of course, depending on the choice of $\kappa_g$, the pure point spectrum may contain other elements. In particular, if the coupling density is nonzero on the whole real line, the singular spectrum is empty and the eigenstate $\Psi_0$ in Eq.~\eqref{vacuum} of the uncoupled Hamiltonian $H_{0,\e}$ becomes unstable. 

More generally, if some $\lambda\in\mathbb{R}$ is a zero of the coupling density (and hence $\operatorname{Im}\Sigma^+_g(\lambda)=0$), a necessary and sufficient condition for it to be in the singular spectrum is that the equation $\varepsilon-\lambda=\operatorname{Re}\Sigma^+_g(\lambda)$ is fulfilled.

\begin{example}[Lebesgue coupling measure on $\mathbb{R}$]\label{ex1}
	Suppose that $J=\mathbb{R}$ and the coupling measure~\eqref{eqn:nug} reads
\begin{equation}
\mathrm{d}\kappa_g(\lambda)=\frac{\beta}{2\pi}\,\mathrm{d}\lambda
\end{equation}
for some $\beta>0$.  This is obtained e.g.\ from a Friedrichs-Lee model with dispersion relation $\omega(k)=k$  on $L^2(\R)$ and a flat form factor $g(k)=\sqrt{\beta/2\pi}$ (see Remark~\ref{rmk:couplmeas}).
Then a straightforward calculation shows that
\begin{equation}
	\Sigma_g(z)=\begin{cases}
	\frac{i\beta}{2},&\operatorname{Im}z>0;\\
	-\frac{i\beta}{2},&\operatorname{Im}z<0.
	\end{cases}
	\label{eq:flatSigma}
\end{equation}	
This implies that the pole equation~\eqref{eigeneq} does not have any solution and hence $\sigma(H_{g,\varepsilon})$ is purely absolutely continuous. The uncoupled eigenvalue  $\e$ ``dissolves'' in the continuum for  any nonzero value of the coupling $\beta$. Physically, the bound state with energy $\varepsilon$ becomes unstable. Indeed, one can show that the evolution of the state $\Psi_0$ in Eq.~\eqref{vacuum}, 
	\begin{equation}
	\begin{pmatrix}
	x(t)\\ \xi(t)
	\end{pmatrix}
	=U_{H_{g,\varepsilon}}(t) \Psi_0
	\end{equation}	
	yields $x(t)=e^{-\left(\frac{\beta}{2}+i\varepsilon\right)t}$, hence $|x(t)|^2=e^{-\beta t}$: an exponential decay takes place. Notice that a purely exponential decay law at both short and large times is possible since
	\begin{itemize}
		\item the initial state $\Psi_0$ is not in the domain of $H_{g,\varepsilon}$, notwithstanding it is in the domain of $H_{0,\varepsilon}$, since, being $\kappa_g$ Lebesgue, necessarily $g\in\mathcal{H}_{-2}\setminus\mathcal{H}$~\cite{artZeno}.
		\item $\Omega$ is unbounded both from below and from above, since $\sigma(\Omega)=\operatorname{supp}(\kappa_g)=\mathbb{R}$, and hence Paley-Wiener's theorem, which prohibits an exponential decay at large times, does not apply~\cite{PaleyWiener,Exner,Khalfin}.
	\end{itemize}
In Section~\ref{sect:5} we will interpret this result in the framework of the theory of resonances.
\end{example}
\begin{example}[Lebesgue coupling measure on $[0,\infty)$]\label{ex2}
	As a second example consider now a Friedrichs-Lee Hamiltonian with dispersion relation $\omega(k)=k$  on $\mathcal{H} = L^2(0,\infty)$ and a flat form factor $g(k)=\sqrt{\beta}$ for some $\beta>0$,  so that the coupling measure~\eqref{eqn:nug} reads
	\begin{equation}
\mathrm{d}\kappa_g(\lambda)=\beta\chi_{[0,\infty)}(\lambda)\,\mathrm{d}\lambda.
\end{equation} 

In this case the uncoupled spectrum is composed of the eigenvalue $\{\varepsilon\}$ and an absolutely continuous part in $J=[0,\infty)$.  Again the self-energy can be evaluated exactly:
\begin{equation}
\Sigma_g(z)=-\beta \log(-z),
\end{equation} 
with $\log$ being the principal value of the complex logarithm, i.e.\  $\log(-z)=\log|z|+i\operatorname{Arg}(-z)$, with $\operatorname{Arg}(z)\in(-\frac{\pi}{2}, \frac{\pi}{2})$. One can check that this function is indeed analytic in $\mathbb{C}\setminus[0,\infty)$ and has a branch cut along the support of the measure. Let us search for the singular spectrum when the coupling is switched on. Solutions of the pole equation~\eqref{eigeneq} must be searched in $(-\infty,0)$ since the coupling density is nonzero in $[0,\infty)$; we have
	\begin{equation}
\beta\log(-\lambda)=-\varepsilon+\lambda
	\end{equation} 
	and hence
	\begin{equation}
	-\frac{\lambda}{\beta}e^{-\lambda/\beta}=\frac{1}{\beta}e^{-\varepsilon/\beta},
	\end{equation} 
	which has a unique real solution expressed through the principal branch $W_0$ of Lambert's $W$-function (or product-log function)~\cite{specialfunct}:
	\begin{equation}
	E(\varepsilon,\beta)=-\beta\,W_0\left(\frac{1}{\beta}e^{-\varepsilon/\beta}\right).
	\end{equation} 
	Hence we have a unique eigenvalue $E(\varepsilon,\beta)$ for $H_{g,\varepsilon}$. It is interesting to study the asymptotic behavior of $E(\varepsilon,\beta)$ as a function of the excitation energy $\e$ of the atom. From the properties of the Lambert function,
	\begin{eqnarray}
	W_0(x)\sim \log x, \quad \text{as } x\to\infty, \qquad W_0(x)\sim x,\quad \text{as }  x\to0,
	\end{eqnarray}
	we have
	\begin{eqnarray}
	E(\varepsilon,\beta)\sim \varepsilon, \quad \text{as } \varepsilon\to-\infty;\qquad E(\varepsilon,\beta)\sim -e^{-\varepsilon/\beta},\quad\text{as } \varepsilon\to\infty.
	\end{eqnarray}
	This means that, when $\varepsilon$ is far away from the lowest energy level of the inner Hamiltonian $\Omega$, the coupled eigenvalue $E(\varepsilon,\beta)$ is close to $\varepsilon$  and hence the spectrum is nearly unchanged. When $\varepsilon$ approaches and eventually reaches $\sigma(\Omega)$, the approximation $E(\varepsilon,\beta)\sim\varepsilon$ is no longer valid and, as $\varepsilon\to\infty$,  $E(\varepsilon,\beta)$ approaches the boundary of $\sigma(\Omega)$.
\end{example}

\begin{example}[Sinusoidal coupling measure]\label{ex3}Now we consider a Friedrichs-Lee Hamiltonian in which the coupling density has support on the whole real line but admits some isolated zeros. Let the  coupling measure~\eqref{eqn:nug} be
\begin{equation}
\mathrm{d}\kappa_g(\lambda)=\frac{\beta}{2\pi}(1-\cos(\tau\lambda))\,\mathrm{d}\lambda,
\end{equation} 
for some $\beta>0$ and $\tau \in \R$,  i.e.\  an absolutely continuous measure with sinusoidal density. 
This can be obtained, e.g.\  from a Friedrichs-Lee model with dispersion relation $\omega(k)=k$ on $L^2(\R)$ and form factor $g(k)= (\beta (1-\cos \tau k )/ 2\pi)^{1/2}$.
As in Example~\ref{ex1}, here the uncoupled operator $H_{0,\e}$ has spectrum composed of the eigenvalue $\varepsilon$ embedded in an absolutely continuous spectrum covering the whole real line; however, here the coupling density vanishes at $\lambda_j=2j\pi/\tau$, with $j\in\mathbb{Z}$.
	
	The self-energy is
	\begin{equation}
	\Sigma_g(z)=\begin{cases}\frac{i\beta}{2}\left(1-e^{i\tau z}\right),&\operatorname{Im}z>0,\\\frac{-i\beta}{2}\left(1-e^{-i\tau z}\right),&\operatorname{Im}z<0,\end{cases}
	\end{equation}
	which is indeed discontinuous on the whole real line except for the zeros $\lambda_j$, moreover it is a periodic function. The pole equation~\eqref{eigeneq} reads
	\begin{equation}
	\begin{cases}
	\varepsilon-\lambda=\frac{\beta}{2}\sin\tau\lambda;\\
	\cos\tau\lambda=1.
	\end{cases}
	\end{equation}
	The second equation is satisfied only when, as anticipated, $\lambda$ is one of the zeros of the (continuous) coupling density; if so, the first equation simply becomes $\varepsilon-\lambda=0$. The following phenomenon occurs: the singular spectrum of $H_{g,\varepsilon}$ is empty except for some ``resonant" values of the parameter $\varepsilon$, namely $\varepsilon=\frac{2\pi j}{\tau}$ for some $j\in\mathbb{Z}$; when this happens, $\sigma_{\mathrm{pp}}(H_{g,\e})=\{\varepsilon\}$. 
\end{example}

\begin{example}[Periodic discrete coupling measure]\label{ex4}
	Consider a Friedrichs-Lee model with dispersion relation $\omega(k)=k$  and flat form factor  $g(k)= \sqrt{\beta/2\pi}$ ($\beta>0$), on $L^2(\R, \mu)$, where $\mu = \sum_{j\in\mathbb{Z}} \delta_{ j\tau}$ ($\tau>0$)
with $\delta_{k}$ being the Dirac measure at $k\in\mathbb{R}$. It is easy to show that the  coupling measure~\eqref{eqn:nug} is
\begin{equation}
\kappa_g =\frac{\beta}{2\pi}\sum_{j\in\mathbb{Z}} \delta_{ j\tau},
\end{equation} 
i.e.\  $\kappa_g$ is supported on $\tau\mathbb{Z}$. 
A direct calculation shows that
	$$
	\Sigma_g(z)=-\frac{\beta}{2}\cot\left(\frac{\pi z}{\tau}\right),
	$$
	which can be extended to the real line except for the poles at $\tau\mathbb{Z}$. The spectrum of the uncoupled operator will be purely singular and consisting of the solutions of the equation
	$$
	\cot\left(\frac{\pi \lambda}{\tau}\right)=\frac{2}{\beta}(\lambda-\varepsilon),
	$$
	which admits a countable set of isolated solutions $\{E_j(\e,\beta,\tau)\}_{j\in\mathbb{Z}}$ (hence the spectrum of $H_{g,\varepsilon}$ is again pure point) where $E_j(\e,\beta,\tau)\in(j\tau,(j+1)\tau)$. In particular, each $E_j(\e,\beta)$ varies smoothly with $\beta$ and
	\begin{itemize}
		\item $E_j(\e,\beta,\tau)\to \tau j$, as $\beta\to0$, i.e.\  in the limit of small coupling we recover the uncoupled spectrum of $\Omega$;
		\item $E_j(\e,\beta,\tau)\to\tau\left(j+\frac{1}{2}\right)$, as $\beta\to\infty$, i.e.\  in the limit of large coupling the spectrum is rigidly shifted by $\frac{\tau}{2}$.
	\end{itemize}
	Differently from the previous cases, the singular spectrum (which is again pure point) is nonempty for every value of the parameters, and indeed contains a countable number of points. 
\end{example}

\begin{example}[Generation of a singular continuous spectrum]\label{ex5}
This example is an adaptation of  Example 2 in~\cite{SimonWolf} to our framework. Recall that a real number in $[0,1]$ is said to be a dyadic rational of order $n\in\mathbb{N}$ if it can be written in the form $j/2^n$ for some integer $j$. Dyadic rationals of any order are the numbers whose expansion in base $2$ is finite; such numbers are  dense in $[0,1]$. 

For any integer $n \geq1$, let us consider the normalised Borel measure
\begin{equation}
\nu_n=\frac{1}{2^n}\sum_{j=1}^{2^n}\delta_{j/2^n},
\end{equation}
and a sequence of positive numbers $(a_n)_{n\in\mathbb{N}}$. We define the measure
\begin{equation}
\nu=\sum_{n=1}^\infty a_n\nu_n.
\end{equation}
$\nu$ is therefore a pure point measure with points on the dense set of dyadic rationals between $0$ and $1$; besides, it is a finite measure iff $\sum_na_n<\infty$, the latter sum being $\nu(\mathbb{R})=\nu([0,1])$.

Consider a Friedrichs-Lee model $H_{g,\e}$ with field momentum space $(X,\mu)=(\R,\nu)$, dispersion relation 
$\omega(k)=k$, and flat form factor $g(k)=\beta^{1/2}$, with $\beta>0$, so that the coupling measure reads $\kappa_g=\beta \nu$. Hence, the spectrum of the uncoupled Hamiltonian $H_{0,\e}$ is entirely pure point, consisting of the dyadic rationals in $[0,1]$ and (if not already dyadic) the atom excitation energy $\varepsilon$. By Theorem~\ref{thm:spe} we know that, by switching on the atom--field interaction, the new spectrum will be entirely singular (since creation of absolutely continuous spectrum is prohibited) and will be the closure of set of all solutions of the pole equation~\eqref{eigeneq}. The discriminant between singular continuous and pure point spectrum is given by the value of the function $G_g(\lambda)$ in~\eqref{eqn:gi}. In our case,
\begin{equation}
G_g(\lambda)=\int_\mathbb{R}\frac{1}{(\lambda-\lambda')^2}\,\mathrm{d}\kappa_g(\lambda')=\beta \sum_{n=1}^\infty a_nL_n(\lambda),
\end{equation}
with
\begin{equation}
L_n(\lambda)= \int_\mathbb{R}\frac{1}{(\lambda-\lambda')^2}\,\mathrm{d}\nu_n(\lambda')=\frac{1}{2^n}\sum_{j=1}^{2^n}\frac{1}{\left(\lambda-\frac{j}{2^n}\right)^2}.
\end{equation}
If $\lambda$ is dyadic, then $G_g(\lambda)=\infty$. Besides, for any $n$, there will be some integer $j_0\in\{1,\dots,n\}$ for which the quantity $|\lambda-j_02^{-n}|$ is the smallest among the others, i.e.\ $j_02^{-n}$ is the dyadic of order $n$ which is closest to $\lambda$. Therefore $|\lambda-j_02^{-n}|$ is smaller than the spacing between any two consecutive dyadic rationals of order $n$, which equals $2^{-n}$. This means that
\begin{equation}
\left|\lambda-\frac{j_0}{2^{n}}\right|\leq 2^{-n},
\end{equation}
and thus
\begin{equation}
\frac{1}{2^n}\frac{1}{\left(\lambda-\frac{j_0}{2^n}\right)^2}\geq\frac{1}{2^n}2^{2n}=2^n.
\end{equation}
In other words, for each $n$ the sum in the definition of $L_n(\lambda)$ contains one term which is larger than $2^n$; hence $L_n(\lambda)>2^n$ and thus
\begin{equation}
G_g(\lambda)\geq \beta \sum_{n=1}^\infty a_n2^n.
\end{equation}
This means that, if we choose $(a_n)_{n\in\mathbb{N}}$ in such a way that
\begin{equation}
\sum_{n=1}^\infty a_n2^n=\infty,
\end{equation}
then necessarily $G_g(\lambda)=\infty$ for \emph{every} $\lambda\in[0,1]$, meaning that the spectrum of $H_{g,\e}$ will be fully singular continuous. Interestingly,  the same phenomenon happens for every value of $\beta>0$, however small; this is an example of \emph{instability} of the dense set of eigenvalues under perturbations.
\end{example}

\section{Resonances}\label{sect:5}
We now complete the discussion on the spectral properties of the Friedrichs-Lee operator by studying the resonances of the model. In the following we will denote by $\CC^\pm := \{z\in\CC \, | \pm \operatorname{Im}z>0\}$ the open upper and lower half-planes, and for any subset $A \subset \CC$ we will use the notation $A^\pm :=A \cap \CC^\pm$ and  $A^0 :=A \cap \R$ for its components in $\CC^\pm$ and $\R$, respectively, so that $A$ is the disjoint union $A =  A^+ \cup A^- \cup A^0$.

Let us recall the definition of a resonance for a self-adjoint operator $T$.

\begin{definition}
	Let $T$ be a self-adjoint operator on a Hilbert space $\mathcal{K}$, and let $z_0\in\mathbb{C}$ with $\operatorname{Im}z_0<0$. Then $z_0$ is a \emph{resonance} for $T$ if there is some $\psi\in\mathcal{K}$ such that the function
	\begin{equation}\label{eqn:defres}
		z\in\CC^+\mapsto R_{\psi}(z)=\left\langle \psi\,\bigg|\frac{1}{T-z}\psi\right\rangle\in\CC
	\end{equation}
admits a meromorphic continuation from the upper to the lower half-plane having a pole at $z_0$.
\end{definition}
\begin{remark}
	The function~(\ref{eqn:defres}) is a Herglotz function; indeed, it is the (standard) Borel transform of the (finite) spectral measure of $T$ at the vector $\psi$, and hence the properties listed in Proposition~\ref{borel} hold true; in particular, every singularity of $R_{\psi}$ lies on the real line. In particular, if $\sigma(T)$ has an absolutely continuous component along some interval $J\subset\R$, then $R_{\psi}$ has a branch cut along $J$ with finite boundary values, and will thus admit an analytic continuation ``through the cut" from the upper to the lower plane~\cite{Gesztesy2}.
	
	A pole at $z_0$ of the meromorphic continuation is identified as a resonance, since it yields a contribution proportional to $e^{-iz_0t}$ to the survival amplitude of $\psi$; if $z_0$ is close to the real line, this contribution may dominate at large times and make $\psi$ a metastable state with energy $\operatorname{Re}z_0$ and decay rate $|\operatorname{Im}z_0|$.
\end{remark}
Now, for any $\psi\in\mathcal{K}$, \emph{real} poles of $R_{\psi}$, i.e.\ real resonances, are obviously simple poles of the resolvent of $T$, i.e.\ \emph{eigenvalues} of $T$. Conversely, an eigenvalue $\lambda_0$ of $T$ is \emph{not} necessarily a pole of $R_{\psi}$ for all nonzero $\psi\in\mathcal{K}$, since the spectral measure of  $T$  at $\psi$ may not be supported in a neighborhood of $\lambda_0$ (e.g., if $\psi$ is an eigenvector of $T$ with eigenvalue $\lambda\neq\lambda_0$); however, if we consider any dense subset $\mathcal{D}$ of $\mathcal{K}$, necessarily there will be some $\psi\in\mathcal{D}$ such that $R_{\psi}$ has a pole at $\lambda_0$. 
 
It would be useful to characterise \emph{non-real} resonances in a similar way. Aguilar-Balslev-Combes-Simon theory of resonances~\cite{AguilarCombes,BalslevCombes,SimonResonances} (see also~\cite{Hislop}) allows us to identify resonances of a self-adjoint operator $T$ as the (complex) eigenvalues of a ``deformed" Hamiltonian $T(w)$, with $w$ being a complex parameter.

We will apply this formalism to the Friedrichs-Lee Hamiltonian $H_{g,\e}$; for this purpose, we will need some assumptions on the structure of the inner Hamiltonian $\Omega$ and of the form factor $g$. First of all, we will make an assumption about the spectral properties of $\Omega$:
\begin{hypothesis}\label{hyp1}
	There is an interval $J\subset\R$ such that $J\subset \sigma_{\mathrm{ac}}(\Omega)$ and $J\cap \bigl(\sigma_{\mathrm{sc}}(\Omega) \cup \sigma_{\mathrm{pp}}(\Omega)\bigr)=\emptyset$.
\end{hypothesis}
\begin{remark}\label{rem:specess}
Notice that, by Theorem~\ref{thm:spessential} and Hypothesis~\ref{hyp1}, $J \subset \sigma_{\mathrm{ess}}(\Omega)$ and $\sigma_{\mathrm{dis}}(\Omega)\cap J= \emptyset$.
\end{remark}
Before introducing other hypotheses, we need the following definition:
\begin{definition}\label{defhyp}
	A \emph{spectral deformation family} is a family $(U(w))_{w\in W^0}$ of linear operators on $\mathcal{H}$ with the following properties
	\begin{itemize}
	        \item  $W^0 \subset \R$ is an open and connected neighbourhood  of $0$; 
		\item $U(w)$ is unitary  for all $w\in W^0$, and  $U(0)=I$;
		\item $(U(w))_{w\in W^0}$ admits a dense set $\mathcal{A}$ of \emph{analytic vectors}, i.e.\ such that for any $\psi\in\mathcal{A}$ the map $w\in W^0\mapsto \psi(w)= U(w)\psi$ has an analytic continuation in an open and connected set $W\subset\CC$, with $W^0=W\cap\R$; moreover $\mathcal{A}(w)=\{\psi(w) \,|\, \psi\in\mathcal{A}\}$ is dense in $\mathcal{H}$  for all $w \in W$.
	\end{itemize}
\end{definition}
Given a spectral deformation family $(U(w))_{w\in W^0}$, we  assume that for all $w\in W^0$:
\begin{equation}\label{eq:common}
U(w)D(\Omega)=D(\Omega),
\end{equation}
and define the deformation of the inner Hamiltonian $\Omega$ as
\begin{equation}\label{eq:deformationOmega}
	w\in W^0 \mapsto \Omega(w) \psi =U(w)\Omega U(w)^* \psi,
\end{equation}
for all $\psi\in D(\Omega)$.

More generally we will need that the operators $\Omega(w)$ are embedded in a \textit{type-A analytic} family, namely:
\begin{hypothesis}\label{hyp2}
There is a family of closed operators $(\Omega(w))_{w\in W}$, defined on the common domain $D(\Omega)$, such that, for all $\psi\in D(\Omega)$, the map 
\begin{equation}
w\in W\mapsto\Omega(w)\psi\in\mathcal{H},
\label{eq:deformationOmegatrue}
\end{equation} 
is the analytic continuation from $W^0$ into $W$ of the map~\eqref{eq:deformationOmega}.
\end{hypothesis}
Notice that for $w\in W^0$, $\Omega(w)$ is unitarily equivalent to $\Omega$ and, in particular, the two operators share the same spectrum; for non-real $w$, the spectrum of $\Omega(w)$ in general  differs from that of $\Omega$. For our purposes, we must require that the essential spectrum of $\Omega(w)$ in~\eqref{eq:deformationOmegatrue} changes ``nicely" as a function of $w\in W$, in the sense that it must be always possible, for any non-real $w$, to ``move continuously" through $J$ from the upper to the lower complex half-plane, so that the matrix elements of the resolvent of $\Omega(w)$ are analytic continuation of those of $\Omega$ through $J$. In order to accomplish these properties, we will make the following assumption:

\begin{hypothesis}
	\label{hyp3}
	There exists an open, connected subset $S\supset J$ of the complex plane
	 such that, for every
	$w\in W^+$, the following properties hold (see Figure~\ref{fig:deformation}):
	\begin{itemize}
	 \item $\sigma_{\mathrm{ess}}\left(\Omega(w)\right)\cap S^+=\emptyset$;
	 \item there exists some open connected $S_w\subset S$, with 
	 $S^0_w\neq\emptyset$,
	 such that
	 $\sigma_{\mathrm{ess}}\left(\Omega(w)\right)\cap (S^-_{w} \cup S^0_w)=\emptyset$.
	 \end{itemize}
\end{hypothesis}
In other words, we are requiring that the region $S$ is such that, for any $w \in W^+$,
\begin{itemize}
	\item the set $S^+$ does not contain elements of the essential spectrum of the deformed Hamiltonian $\Omega(w)$;
	\item there is a subset $S_w \subset S$ whose component with nonpositive imaginary part, i.e.\ $S^-_w \cup S_w^0$, does not intersect the essential spectrum of $\Omega(w)$. 
\end{itemize}

\begin{figure}\centering
	\begin{tikzpicture}
	\node at (0,0) {\includegraphics[scale=0.5]{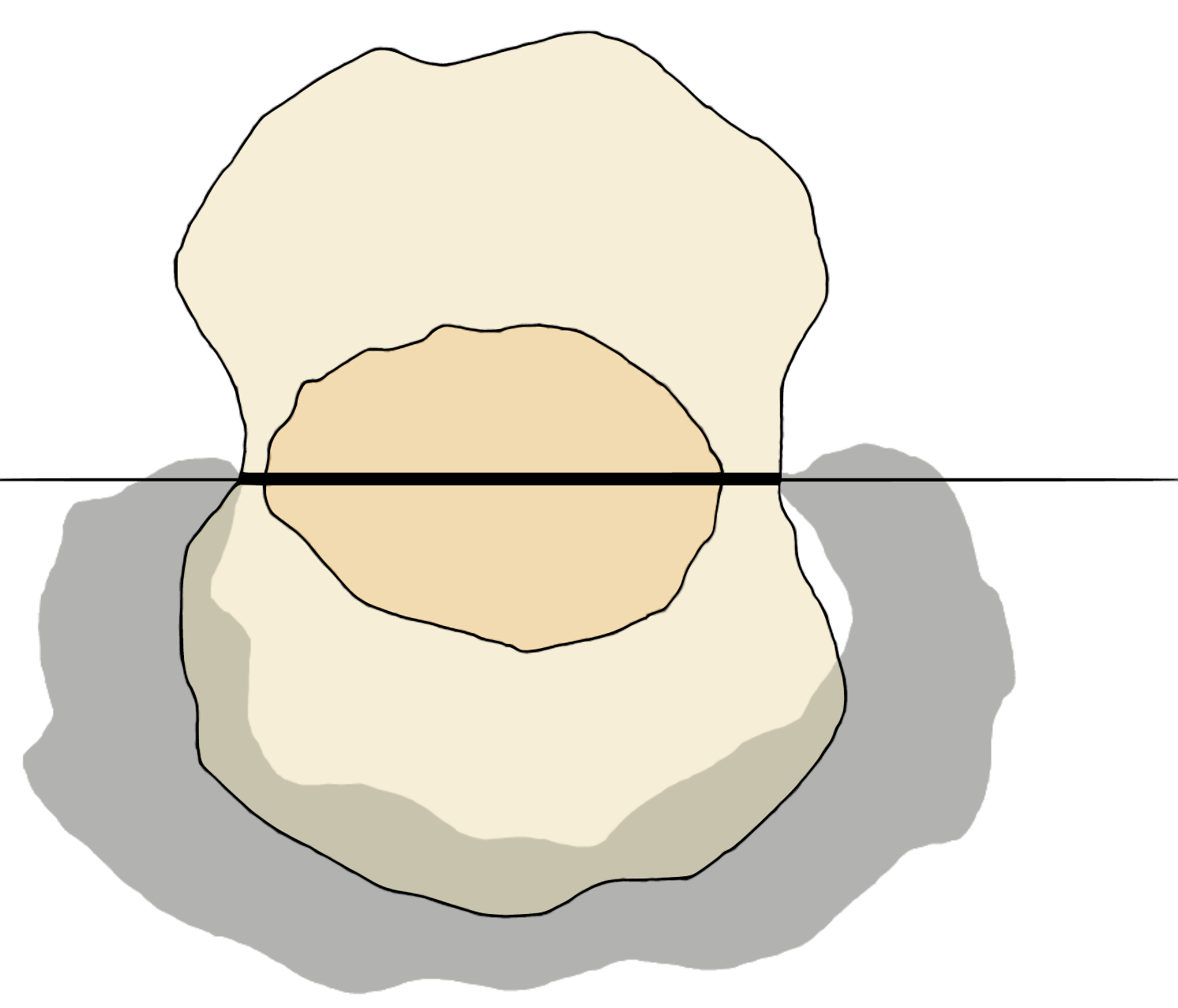}};
	\node[text=gray!30!black] at (3.15,0.50) {\small$\mathrm{Im}\,z>0$};
	\node[text=gray!30!black] at (3.15,-0.15) {\small$\mathrm{Im}\,z<0$};
	\node[text=orange!40!black] at (-0.5,2) {\Large$S$};
	\node[text=orange!40!black] at (-0.5,0.7) {$S_w$};
	\node[text=black] at (-0.5,-0.15) {$\sigma_{\mathrm{ess}}(\Omega)$};
	\node[text=gray!20!black] at (0.5,-3.4) {$\sigma_{\mathrm{ess}}(\Omega(w))$};
	\end{tikzpicture}
	\caption{Graphical representation of Hypothesis~\ref{hyp3}. For any $w \in W^+$ the set $S^+$ does not intersect the essential spectrum of the deformed Hamiltonian $\Omega(w)$ (grey region). Moreover,  there exists a set $S_w\subset S$ such $S^-_{w} \cup S^0_w$ does not intersect the essential spectrum of $\Omega(w)$.}
	\label{fig:deformation}
\end{figure}
Hypotheses~\ref{hyp1}--\ref{hyp3} would be sufficient  to characterise  all resonances of $\Omega$ as eigenvalues of its deformations $\Omega(w)$ (see~\cite{Hislop}); since we are interested in characterising the resonances of $H_{g,\e}$, we will need additional hypotheses on the form factor $g\in\mathcal{H}_{-2}$.

First of all, we define the following deformation of the form factor $g$:
\begin{equation}\label{eq:deformationg}
	w\in W^0 \mapsto \braket{g(w)|\psi}=\braket{g|U(w)^*\psi},
\end{equation}
for all $\psi\in D(\Omega)$, and require that the functionals $g(w)$ are embedded in an analytic family:
\begin{hypothesis}\label{hyp4}
	There exists a family $(g(w))_{w\in W}\subset\mathcal{H}_{-2}$ such that
for all $\psi\in D(\Omega)$, the map 
\begin{equation}
w\in W\mapsto {\braket{g(w)|\psi}}\in\CC
\end{equation} 
is the anti-analytic continuation from $W^0$ into $W$ of the map in~\eqref{eq:deformationg}.
\end{hypothesis}

Finally, we will need the following assumption on the form factor $g$:
\begin{hypothesis}\label{hyp5}
For all $\lambda \in J$ the boundary value $\Sigma_g^+$ in~\eqref{eq:Sigma+def} of the self-energy has positive imaginary part, namely
\begin{equation}
\operatorname{Im}\Sigma_g^+(\lambda)>0.
\end{equation}
\end{hypothesis}
\begin{remark}\label{specac}
Notice that, by Theorem~\ref{thm:spe}, Hypothesis~\ref{hyp1} and Hypothesis~\ref{hyp5}, it follows that $J\subset \sigma_{\mathrm{ac}}(H_{g,\e})$ and $J\cap \bigl(\sigma_{\mathrm{sc}}(H_{g,\e}) \cup \sigma_{\mathrm{pp}}(H_{g,\e})\bigr)=\emptyset$.
\end{remark}
Hypotheses~\ref{hyp4} and \ref{hyp5} on the form factor $g$, together with Hypotheses \ref{hyp1}--\ref{hyp3} on the inner Hamiltonian $\Omega$, allow us to characterise all resonances of the Friedrichs-Lee Hamiltonian $H_{g,\e}$.
	\begin{theorem}\label{thm:res}
		Let $H_{g,\e}$ a Friedrichs-Lee Hamiltonian~\eqref{domain}--\eqref{action} with excitation energy $\e\in\mathbb{R}$, inner Hamiltonian $\Omega$ and form factor $g\in\mathcal{H}_{-2}$. Suppose that Hypotheses~\ref{hyp1}--\ref{hyp5} hold true, and define, for all $w\in W$, the operator $H_{g,\e}(w)$ as follows:
			\begin{equation}
		\label{domainw}
		D(H_{g,\varepsilon}(w))=\left\{\begin{pmatrix}
		x\\\xi-x\frac{\Omega(w)}{\Omega(w)^2+1}g(w)
		\end{pmatrix}\Bigg|\,x\in\mathbb{C},\;\xi\in D(\Omega)\right\}
		\end{equation}
		\begin{equation}
		\label{actionw}
		H_{g,\varepsilon}(w)\begin{pmatrix}
		x\\\xi-x\frac{\Omega(w)}{\Omega(w)^2+1}g(w)
		\end{pmatrix}=\begin{pmatrix}
		\varepsilon x+\braket{g(w)|\xi}\\
		\Omega(w)\xi+x\frac{1}{\Omega(w)^2+1}g(w)
		\end{pmatrix}.
		\end{equation}
		Let $S^-_{W} = \bigcup_{w \in W^+}S^-_w$. Then:
		\begin{itemize}
			\item [(i)] There is a dense set $\tilde{\mathcal{A}} \subset \CC \oplus \mathcal{H}
			$ of analytic vectors such that, for every $\Psi\in\tilde{\mathcal{A}}$, the function
			 \begin{equation}
			 z\in\CC^+\mapsto \Braket{\Psi\,\bigg|\frac{1}{H_{g,\e}-z}\Psi}\in\mathbb{C}
			 \end{equation} 
			 has a meromorphic continuation across $J$ from $\CC^+$ to $S^-_{W}$;
			\item [(ii)]  $z_0 \in S^-_{W}$ is a resonance of $H_{g,\e}$ if and only if  $z_0$ is an eigenvalue of $H_{g,\e}(w_0)$ for some $w_0\in W^+$.\end{itemize}
\end{theorem}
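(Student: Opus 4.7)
The plan is to implement the Aguilar--Balslev--Combes--Simon scheme on the extended Hilbert space $\CC\oplus\mathcal{H}$. First, lift the spectral deformation family to $\tilde{U}(w):=I_\CC\oplus U(w)$ acting on $\CC\oplus\mathcal{H}$, and take as dense set of analytic vectors $\tilde{\mathcal{A}}:=\CC\oplus\mathcal{A}$. A direct computation using~\eqref{eq:common}, $\Omega(w)=U(w)\Omega U(w)^{*}$, and the relation $\braket{g(w)|\psi}=\braket{g|U(w)^{*}\psi}$ from~\eqref{eq:deformationg} shows that, for real $w\in W^{0}$,
\begin{equation}
H_{g,\varepsilon}(w)=\tilde{U}(w)\,H_{g,\varepsilon}\,\tilde{U}(w)^{*},
\end{equation}
so that the two operators are unitarily equivalent on $W^{0}$. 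Hypotheses~\ref{hyp2} and~\ref{hyp4} then ensure that $(H_{g,\varepsilon}(w))_{w\in W}$ is an analytic family of closed operators on the common domain~\eqref{domainw}, and repeating the resolvent computation of Theorem~\ref{thm:singmodel}(ii) gives, for every $z$ in the resolvent set of $\Omega(w)$ with $\varepsilon-z-\Sigma_{g}(w,z)\neq 0$, a resolvent of the same form as~\eqref{resolvent00} after replacing $\Omega,g$ by $\Omega(w),g(w)$; here $\Sigma_{g}(w,z)$ denotes the analytic continuation of~\eqref{self}. Consequently the eigenvalues of $H_{g,\varepsilon}(w)$ lying in the resolvent set of $\Omega(w)$ are exactly the zeros of $z\mapsto\varepsilon-z-\Sigma_{g}(w,z)$.

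For $\Psi\in\tilde{\mathcal{A}}$ define
\begin{equation}
F_{\Psi}(w,z):=\left\langle\Psi(\bar{w})\,\bigg|\,\frac{1}{H_{g,\varepsilon}(w)-z}\Psi(w)\right\rangle,
\end{equation}
where $\Psi(w):=\tilde{U}(w)\Psi$ is extended analytically to $w\in W$; since bra-vectors depend anti-linearly on vectors, $w\mapsto\bra{\Psi(\bar{w})}$ is analytic in $w$, so $F_{\Psi}$ is jointly analytic in $(w,z)$ on $\{(w,z) : z\in\rho(H_{g,\varepsilon}(w))\}$. For real $w$, the unitarity of $\tilde{U}(w)$ yields $F_{\Psi}(w,z)=R_{\Psi}(z):=\braket{\Psi|(H_{g,\varepsilon}-z)^{-1}\Psi}$, which is independent of $w$; uniqueness of analytic continuation then propagates the identity $F_{\Psi}(w,z)=R_{\Psi}(z)$ to every admissible complex $w$.

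To prove (i), fix $w\in W^{+}$ and set $\Sigma_{w}:=S^{+}\cup S_{w}^{0}\cup S_{w}^{-}$. This set is open and connected: $S_{w}$ is open in $\CC$ with $S_{w}^{0}\neq\emptyset$, hence $S_{w}^{+}\neq\emptyset$ and $S^{+}\cap S_{w}\neq\emptyset$. By Hypothesis~\ref{hyp3}, $\sigma_{\mathrm{ess}}(\Omega(w))\cap\Sigma_{w}=\emptyset$, and since $H_{g,\varepsilon}(w)$ differs from $\varepsilon\oplus\Omega(w)$ by a finite-rank resolvent perturbation (as in Theorem~\ref{thm:singlimit}(i) applied to the deformed operator), also $\sigma_{\mathrm{ess}}(H_{g,\varepsilon}(w))\cap\Sigma_{w}=\emptyset$. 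Therefore $z\mapsto(H_{g,\varepsilon}(w)-z)^{-1}$ is meromorphic on $\Sigma_{w}$, with poles only at isolated eigenvalues of $H_{g,\varepsilon}(w)$ there. The identity $R_{\Psi}(z)=F_{\Psi}(w,z)$ then yields a meromorphic extension of $R_{\Psi}$ from $\CC^{+}$ across $J\supset S_{w}^{0}$ into $S_{w}^{-}$; varying $w\in W^{+}$ and gluing by the identity principle in $z$ produces the continuation onto $S_{W}^{-}=\bigcup_{w\in W^{+}}S_{w}^{-}$.

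For (ii), the forward direction is immediate: if $z_{0}\in S_{W}^{-}$ is a resonance, choose $\Psi\in\tilde{\mathcal{A}}$ such that $R_{\Psi}$ has a pole at $z_{0}$ and $w_{0}\in W^{+}$ with $z_{0}\in S_{w_{0}}^{-}$; the identity $R_{\Psi}(z)=F_{\Psi}(w_{0},z)$ forces $(H_{g,\varepsilon}(w_{0})-z)^{-1}$ itself to be singular at $z_{0}$, which, being in $\Sigma_{w_{0}}$, is an isolated eigenvalue of $H_{g,\varepsilon}(w_{0})$. For the converse, if $z_{0}$ is an eigenvalue of $H_{g,\varepsilon}(w_{0})$ its Riesz projection $P_{z_{0}}$ is a non-zero finite-rank operator; using that both $\tilde{\mathcal{A}}(w_{0})$ and $\{\Psi(\bar{w}_{0}) : \Psi\in\tilde{\mathcal{A}}\}$ are dense in $\CC\oplus\mathcal{H}$ (Definition~\ref{defhyp}), one selects $\Psi\in\tilde{\mathcal{A}}$ with $\braket{\Psi(\bar{w}_{0})|P_{z_{0}}\Psi(w_{0})}\neq 0$, whence $R_{\Psi}=F_{\Psi}(w_{0},\cdot)$ genuinely exhibits the pole at $z_{0}$. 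The main technical obstacle is precisely this last non-degeneracy step: because $H_{g,\varepsilon}(w_{0})$ is not self-adjoint, $P_{z_{0}}$ is not orthogonal, and one must argue that the density hypothesis on \emph{both} sides (ket and bra, i.e.\ $\mathcal{A}(w)$ and its conjugate) prevents $\mathrm{Ran}\,P_{z_{0}}$ and $\mathrm{Ran}\,P_{z_{0}}^{*}$ from simultaneously lying in the annihilator of the bilinear pairing restricted to analytic vectors.
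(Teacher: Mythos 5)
Your proposal follows essentially the same route as the paper's proof: lifting the deformation family to $\tilde{U}(w)=I\oplus U(w)$ on $\CC\oplus\mathcal{H}$ with analytic vectors $\tilde{\mathcal{A}}=\CC\oplus\mathcal{A}$, verifying $H_{g,\e}(w)=\tilde{U}(w)H_{g,\e}\tilde{U}(w)^*$ for real $w$, computing the deformed resolvent so that the spectrum of $H_{g,\e}(w)$ in $S^+\cup S^0_w\cup S^-_w$ reduces to $\sigma(\Omega(w))$ plus the isolated zeros of $\e-z-\Sigma_g(z,w)$, propagating $F_\Psi(w,z)=R_\Psi(z)$ by the identity principle in $w$, and matching poles of the continuation with eigenvalues of $H_{g,\e}(w_0)$ via density of the analytic vectors. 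Your use of $\Psi(\bar{w})$ in the bra to secure analyticity in $w$, and the Riesz-projection/non-degeneracy remark in the converse of (ii), are minor refinements of the paper's argument, which at those points proceeds identically but more tersely.
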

	
	\begin{proof}
		Our proof will be organised as follows. First of all, we will prove that $H_{g,\e}$ satisfies properties analogous to the ones fulfilled by $\Omega$, more precisely:
		\begin{enumerate}
			\item [(a)] $H_{g,\e}$ has a purely absolutely continuous spectrum on $J$;
					\item [(b)] for all $w \in W^0$ we define
					\begin{equation}\label{eq:tilde}
					 \tilde{U}(w)=I\oplus U(w), \qquad \tilde{\mathcal{A}}=\mathbb{C}\oplus\mathcal{A};
					 \end{equation}
					 then it results that $(\tilde{U}(w))_{w\in W^0}$ is a spectral deformation family with a dense set of analytic vectors $\tilde{\mathcal{A}}$.
			Moreover, for all $w\in W^0$ the Hamiltonian $H_{g,\e}(w)$ in Eq.~\eqref{actionw} coincides with $\tilde{U}(w)H_{g,\e}\tilde{U}(w)^*$ and the map $w \in W^0 \mapsto H_{g,\e}(w)=\tilde{U}(w)H_{g,\e}\tilde{U}(w)^*$ admits an analytic continuation from $W^0$ into $W$. 
			\item [(c)] for every $w\in W^+$, $\sigma_{\mathrm{ess}}(H_{g,\e}(w))\cap S^+=\emptyset$ and $\sigma_{\mathrm{ess}}(H_{g,\e}(w))\cap (S_w^- \cup S_w^0 )=\emptyset$;
		\end{enumerate}
	then, following an analogous argument as in~\cite{Hislop}, we will deduce $(i)$ and $(ii)$ from $(a)$--$(c)$.
		
	Property $(a)$ follows directly from Theorems~\ref{thm:spessential} and Theorem~\ref{thm:spe} using Hypotheses~\ref{hyp1} and~\ref{hyp5}, as explained in Remarks~\ref{rem:specess} and~\ref{specac}. 
	
	Let us prove property $(b)$. It is straightforward that  $\tilde{\mathcal{A}}$ and $(\tilde{U}(w))_{w\in W}$ in~\eqref{eq:tilde} have the desired properties. To prove the equality $H_{g,\e}(w)=\tilde{U}(w)H_{g,\e}\tilde{U}(w)^*$ for all $w\in W^0$, notice that, for any $w\in W^0$, the domain of $\tilde{U}(w)H_{g,\e}\tilde{U}(w)^*$ is
	\begin{eqnarray}
			\label{domainw0}
			\tilde{U}(w)D(H_{g,\varepsilon})&=&\left\{\begin{pmatrix}
			x\\U(w)\xi'-xU(w)\frac{\Omega}{\Omega^2+1}g
			\end{pmatrix}\Bigg|\,x\in\mathbb{C},\;\xi'\in D(\Omega)\right\}\nonumber\\
			&=&\left\{\begin{pmatrix}
						x\\\xi-xU(w)\frac{\Omega}{\Omega^2+1}g
						\end{pmatrix}\Bigg|\,x\in\mathbb{C},\;\xi\in D(\Omega)\right\},
			\end{eqnarray}
			where we have used the assumption $U(w)D(\Omega)=D(\Omega)$ in~\eqref{eq:common}; besides,
			\begin{equation}
U(w)\frac{\Omega}{\Omega^2+1}g=\frac{\Omega(w)}{\Omega(w)^2+1}g(w),
			\end{equation}
			as follows easily from the equality $\Omega(w)=U(w)\Omega U(w)^*$ for  $w \in W^0$, and hence the domains \eqref{domainw} and \eqref{domainw0} coincide. The equality between the actions of $H_{g,\e}(w)$ and $\tilde{U}(w)H_{g,\e}\tilde{U}(w)^*$ on the domain can be shown in a similar way.

		Now we prove $(c)$ and $(d)$. By Hypothesis~\ref{hyp1}, for every $w\in W^+$, $\Omega(w)$ has no essential spectrum in $S^+$ and hence, by Hypotheses~\ref{hyp2}--\ref{hyp4}, then the function
		\begin{equation}
		(z,w)  \mapsto \Sigma_g(z,w)=\left\langle g(w)\,\bigg|\,\left(\frac{1}{\Omega(w)-z}-\frac{\Omega(w)}{\Omega(w)^2+1}\right)g(w)\right\rangle.
		\end{equation}
		is meromorphic in $S^+ \times (W^+ \cup W^0)$. Moreover, since $U(w)$ is unitary for all $w\in W^0$, it results that
		\begin{equation}
			\forall z\in S^+,\;\forall w\in W^0:\;\Sigma_g(z,w)=\Sigma_g(z).
		\end{equation}
	Hence, by the identity principle for meromorphic functions, we have
		\begin{equation}
		\forall z\in S^+,\;\forall w\in (W^+ \cup W^0):\;\Sigma_g(z,w)=\Sigma_g(z),
		\end{equation}
		and hence the function $z\in S^+ \times (W^+ \cup W^0) \mapsto \Sigma_g(z,w)$ is analytic because the self-energy $z  \in S^+\mapsto \Sigma_g(z)$ is analytic.    Now fix $w\in W^+$. By Hypothesis~\ref{hyp3} there is $S_w\subset S$ such that $\sigma_{\mathrm{ess}}(\Omega(w))$ has no intersection with the set $S^-_w \cup S^0_w$. Hence, for this value of $w$, the map $z \in S^+ \mapsto \Sigma_g(z,w)$ can be meromorphically continued from $S^+$ to $S^-_w$ across $S^0_w\subset J$. Repeating the process for every $w\in W^+$, the function $z \in S^+ \mapsto \Sigma_g(z,w)$ is meromorphically continued from $S^+$ to $S^-_{W} \cup S^0_{W}$, obtaining the function
		\begin{equation}\label{eq:meromcontsigma}
			z\in S^+\cup S^-_{W} \cup S^0_{W} \mapsto \Sigma_g(z,w)\in\CC;
		\end{equation}
this continuation is obviously unique and independent of $w$.
		
Now we compute the resolvent of $H_{g,\e}(w)$ with $w \in W^+$. It can be easily proven, in the same way as for the non-deformed Friedrichs-Lee Hamiltonian, that for every $z \in S^+\cup S^-_{W} \cup S^0_{W}$, with $z \notin \sigma(H_{g,\e}(w))$:
		\begin{equation}
		\frac{1}{H_{g,\e}(w)-z}\begin{pmatrix}
		x\\\xi
		\end{pmatrix}=\begin{pmatrix}
		\frac{x-\left\langle g(w)\,\big|\frac{1}{\Omega(w)-z}\xi\right\rangle}{\varepsilon-z-\Sigma_{g}(z,w)}\\\frac{1}{\Omega(w)-z}\xi-\frac{x-\left\langle g(w)\,\big|\frac{1}{\Omega(w)-z}\xi\right\rangle}{\varepsilon-z-\Sigma_{g}(z,w)}\frac{1}{\Omega(w)-z}g(w)
		\end{pmatrix}.
		\label{resolventw}
		\end{equation}
		Therefore the  elements of the spectrum of $H_{g,\e}(w)$ will be either
		\begin{itemize}
			\item elements of the spectrum of $\Omega(w)$, or
			\item solutions of the equation $\e-z=\Sigma_g(z,w)$.
		\end{itemize}
		By Hypothesis~\eqref{hyp3}, fixing $w\in W^+$, then $\sigma_{\mathrm{ess}}(\Omega(w))\cap S^+=\emptyset$ and $\sigma_{\mathrm{ess}}(\Omega(w))\cap (S^0_w\cup S^-_w)=\emptyset$, i.e.\ the region $S^+\cup S^0_w\cup S^-_w$ does not contain essential spectrum of $\Omega(w)$. Moreover, since the function in \eqref{eq:meromcontsigma} is meromorphic, 
the equation $\varepsilon-z=\Sigma_g(z,w)$ can only admit isolated solutions, i.e.\ the spectrum of $H_{g,\e}(w)$ can only have isolated eigenvalues in $S^-_w \cup S^0_w$.
	
		Now we can proceed with the proof of $(i)$ and $(ii)$.

		$(i)$ Let $\Psi\in\tilde{\mathcal{A}}$ and consider the following analytic function
		\begin{equation}
			z\in \CC^+ \mapsto R_\Psi(z)=\Braket{\Psi\,\bigg|\frac{1}{H_{g,\e}-z}\Psi}\in\CC.
		\end{equation}
		Since $(\tilde{U}(w))_{w \in W^0}$ is a spectral deformation family with dense set of analytic vectors $\tilde{\mathcal{A}}$, the map $w \in W^0 \mapsto \Psi(w)=\tilde{U}(w)\Psi \in \CC \oplus \mathcal{H}$ admits an  analytic continuation in $W$. 
		Now, we consider the function 
		\begin{equation}\label{eq:Rpsizw}
		(z,w)\in S^+\times (W^+ \cup W^0)  \mapsto R_\Psi(z,w)=\Braket{\Psi(w)\,\bigg|\frac{1}{H_{g,\e}(w)-z}\Psi(w)}\in\CC;
		\end{equation}
		using the expression for the resolvent in Eq.~\eqref{resolventw}, Hypotheses~\ref{hyp2}-\ref{hyp4}, $(b)$ and $(c)$, we obtain that the function in~\eqref{eq:Rpsizw}  is meromorphic. Moreover it is easy to see that, for all $w\in W^0$,
		\begin{equation}
\Braket{\Psi(w)\,\bigg|\frac{1}{H_{g,\e}(w)-z}\Psi(w)}=\Braket{\tilde{U}(w) \Psi\,\bigg| \tilde{U}(w) \frac{1}{H_{g,\e}-z}\Psi},
		\end{equation}
		and thus, since $\tilde{U}(w)$ is unitary,
		\begin{equation}
			\forall z\in S^+,\;\forall w\in W^0:\;R_{\Psi}(z,w)=R_\Psi(z).
		\end{equation}
		Hence, by the identity principle for meromorphic functions, we have
		\begin{equation}
		\forall z\in S^+,\;\forall w\in W^+ \cup W^0:\;R_{\Psi}(z,w)=R_\Psi(z),
		\end{equation}
		and so the function $z\in S^+ \times(W^+ \cup W^0)  \mapsto R_{\Psi}(z,w)$ is analytic because $z  \in S^+\mapsto R_{\Psi}(z)$ is analytic. Now fix $w\in W^+$. By $(c)$ we have $\sigma_{\mathrm{ess}}(H_{g,\e}(w))\cap (S_w^- \cup S_w^0 )=\emptyset$; hence, for this value of $w$, the map $z \in S^+ \mapsto R_{\Psi}(z,w)$ can be meromorphically continued from $S^+$ to $S^-_w$ across $S^0_w\subset J$. Repeating the process for every $w\in W^+$, the function $z \in S^+ \mapsto R_{\Psi}(z,w)$ can be meromorphically continued from $S^+$ to $S^-_{W} \cup S^0_{W}$ obtaining the function
		\begin{equation}\label{eq:meromcont}
			z\in S^+\cup S^-_{W} \cup S^0_{W} \mapsto R_{\Psi}(z,w)\in\CC;
		\end{equation}
this continuation is obviously unique and independent of $w$.
		
		$(ii)$ Let $z_0\in S^-_{W}$ be a resonance for $H_{g,\e}$, i.e.\ $z_0\in S^-_{w_0}$ for some $w_0\in W^+$.
 By definition, there is some $\Psi \in \CC \oplus \mathcal{H}$  such that the function
	\begin{equation}\label{eqn:defres1}
		z\in\CC^+\mapsto R_{\psi}(z)=\left\langle \Psi\,\bigg|\frac{1}{H_{g,\e}-z}\Psi\right\rangle\in\CC
	\end{equation}
	admits a meromorphic continuation from the upper to the lower half-plane having a pole at $z_0$. Since $\tilde{\mathcal{A}}$ is dense in $\CC \oplus \mathcal{H}$, we can assume $\Psi \in \tilde{\mathcal{A}}$. Then, by~\eqref{eq:meromcont},  the meromorphic continuation from the upper to the lower half-plane of~\eqref{eqn:defres1} is the expectation value of the resolvent of $H_{g,\e}(w_0)$ at $\Psi(w_0)$, i.e.\ $\left\langle \Psi(w_0)\,\Big|\frac{1}{H_{g,\e}(w_0)-z}\Psi(w_0)\right\rangle$, and the poles of the latter object are necessarily eigenvalues of $H_{g,\e}(w_0)$; hence $z_0$ is an eigenvalue of $H_{g,\e}(w_0)$.
		
		Conversely, suppose that   $z_0\in S^-_{W}$   is an eigenvalue of $H_{g,\e}(w_0)$ for some $w_0\in W^+$. Since $\tilde{\mathcal{A}}(w_0)= \left\{ \Psi(w_0)\,\big|\,\Psi \in \tilde{\mathcal{A}}\right\}$ is dense in $\CC \oplus \mathcal{H}$,  there must be necessarily some $\Psi\in\tilde{\mathcal{A}}$ such that the map in~\eqref{eq:meromcont} has a pole in $z_0$, and hence $z_0$ is a resonance of $H_{g,\e}$.
\end{proof}

\begin{example}\label{ex6}
	Let $\Omega$ be the multiplication operator by a continuous dispersion relation $\omega$ on a momentum space $(X,\mu)$ with an absolutely continuous measure $\mu$, so that Hypothesis \ref{hyp1} is trivially satisfied. A spectral deformation family may be constructed by considering an \textit{isometric global flow} $R_w$, for $w\in W^0$, i.e.\ an operator acting on the momentum space $X$ as follows:
\begin{equation}
(U(w)\psi)(k)=\mathcal{J}_w(k)^{1/2}\psi(R_w(k))
\end{equation}
for $\psi\in\mathcal{H}$, $\mathcal{J}_w(k)$ being the Jacobian of the transformation. We must then find some open and connected $W\subset\CC$, with $W^0=W\cap\R$, such that:
\begin{itemize}
	\item there exists a dense set $\mathcal{A}$ of analytic vectors such that $w\in W^0\mapsto \psi(w)=U(w)\psi\in\mathcal{H}$ has an analytic extension to the whole set $W$;
	\item the operator $\Omega(w)=U(w)\Omega U(w)^*$ defined, for $w\in W^0$, as
	\begin{equation}
	(\Omega(w)\psi)(k)=\omega(R_w(k))\psi(k)
	\end{equation}
	for all $\psi\in D(\Omega)$, admits a strongly analytic extension $(\Omega(w))_{w\in W}$ to $W$ (Hypothesis \ref{hyp2});
	\item the form factor $g(w)$ defined, for $w\in W^0$, as $g(w)=U(w)g$, admits an analytic extension to $W$ (Hypothesis \ref{hyp4}). 
\end{itemize}
This means that both the dispersion relation $\omega$, the form factor $g$ and every function $\psi\in\mathcal{A}$, which are functions of the \textit{real} variable $k$, must admit an extension as function of a \textit{complex} variable ranging in some open subset of the complex plane. Besides, the flow must be chosen in such a way that Hypothesis \ref{hyp3} holds, i.e.\ the spectrum of $\Omega(w)$ (i.e.\  its essential range) is deformed in the desired way to unearth the resonances.

As a concrete example, consider a Friedrichs-Lee operator on $L^2(\mathbb{R})$ with dispersion relation $\omega(k)=k$ (hence with $\sigma(\Omega)=\mathbb{R}$) and flat singular form factor $g(k)=\sqrt{\beta/2\pi}$, $\beta >0$; this is a realisation of a Friedrichs-Lee model with flat coupling measure studied in example~\ref{ex1}. Define, for any $w\in\R$,
\begin{equation}
R_w(k)=k-w,
\end{equation}
whose Jacobian for real $w$ is $1$; hence we obtain
\begin{equation}
(U(w)\psi)(k)=\psi(k-w),\qquad (\Omega(w)\psi)(k)=(k-w)\psi(k).
\end{equation}
The extension of $\Omega(w)$ to complex values of $w$, with $D(\Omega(w))=D(\Omega)$, is immediate; besides, 
\begin{equation}
	\sigma(\Omega(w))=\sigma_{\mathrm{ess}}(\Omega(w))=\R-i\operatorname{Im}w,
\end{equation}
and hence, for $w\in W^+$, the (essential) spectrum of $\Omega$ shifts rigidly into the lower half-plane; Hypothesis \ref{hyp2} holds with $S^+=\CC^+$ and $S^-_w=\{w'\in\CC\,|\, \operatorname{Im}w'<- \operatorname{Im}w<0\}$.

We must also find a set of analytic vectors $\mathcal{A}$ for $U(w)$. Let $\mathcal{A}$ be the set of functions $\psi\in L^2(\mathbb{R})$ with the form
\begin{equation}
	\psi(k)=P(k)e^{-ak^2},
	\label{analytic}
\end{equation}
where $P$ is a polynomial  and $a>0$; such functions are dense in $L^2(\mathbb{R})$, and for any $w\in\CC$ the function $\psi(w,\cdot)=\psi(\,\cdot\, -w)\in L^2(\R)$ is naturally defined, hence $\mathcal{A}$ is a dense set of vectors on which the action of $U(w)$ can be defined for all $w\in\CC$. Finally, in our simple case we have $g(w,k)=\sqrt{\beta/2\pi}$ for all $w\in\CC$, i.e.\ $g(w,\cdot)$ is again a constant function.

We have thus found a dense set of analytic vectors $\tilde{\mathcal{A}}=\CC\oplus\mathcal{A}$ such that, for every $\Psi\in\tilde{\mathcal{A}}$, the function $z\in\CC^+\mapsto R_\Psi(z)=\big\langle\Psi\big|\frac{1}{H_{g,\e}-z}\Psi\big\rangle\in\CC$ has an analytic continuation in $S^-_W=\CC^-$. By Theorem \ref{thm:res}, all poles of those functions are solutions of the equation
\begin{equation}
	\e-z=\Sigma_g(z,w)
\end{equation}
for some $w\in\CC^+$. In our case, taking $\operatorname{Im}w$ large enough, this equation reads
\begin{equation}
	\e-z=\frac{i\beta}{2}
\end{equation}
and hence we have a single resonance in $z=\e-\frac{i\beta}{2}$. The associated eigenvector $\Psi$ of $H_{g,\e}(w)$ is characterised by a boson wavefunction $\xi(w,\cdot)$ defined by
\begin{equation}
	\xi(w,k)=\sqrt{\frac{\beta}{2\pi}}\frac{1}{k-w-\e+\frac{i\beta}{2}}.
\end{equation}
By construction, the Fourier-Laplace transform of $R_\Psi(z)$, i.e.\ its survival amplitude, will be an exponential function with decay rate $\beta/2$. In particular, in the limit $\beta\to0$, its decay rate and boson component vanish and we recover the (real) eigenvalue $\lambda=\e$, with eigenvector $\Psi_0$, of the uncoupled Friedrichs-Lee Hamiltonian. 

Summing up, as anticipated in the discussion in Example \ref{ex2}, when switching on a flat coupling $g(k)=\sqrt{\beta/2\pi}$ in the Friedrichs-Lee Hamiltonian with $X=L^2(\R)$ and $\omega(k)=k$, the uncoupled eigenvalue $\lambda=\e$ is converted into a \textit{resonance} associated to an unstable state whose survival probability decays exponentially with rate $\beta$.
\end{example}

\section{Inverse problem: the case of exponential decay}\label{sect:6}
Theorem~\ref{thm:spe} allows us to fully characterise the spectrum, and hence the dynamics, of a Friedrichs-Lee Hamiltonian with some given momentum space $(X,\mu)$, dispersion relation $\omega$ and form factor $g$. On the other hand, we may want to solve the \emph{inverse problem}, i.e.\  finding a Friedrichs-Lee model with some \emph{given} spectrum, and hence some given dynamics.

As pointed out in Remark~\ref{rmk:couplmeas}, the spectrum of the Friedrichs-Lee Hamiltonian with a fixed $\e$ depends entirely on the coupling measure $\kappa_g$ in Eq.~\eqref{eqn:nug}, or equivalently on the self-energy $\Sigma_g$ in Eq.~\eqref{self}, and different choices of the momentum space $(X,\mu)$, of the dispersion relation $\omega$ and of the form factor $g$, yielding the same $\kappa_g$, are fully equivalent at the spectral level. 

A trivial solution (although not necessarily physically meaningful) always exists: just choose $\omega(k)=k$, $(X,\mu)=(\R,\kappa_g)$ and $g(k)=1$.
However, in the applications, the momentum space $(X,\mu)$ and the dispersion relation $\omega$ are fixed by the nature and the structure of the bosonic bath, while the form factor $g$ may be suitable engineered. Thus we are led to study the following inverse problem: \emph{for a given choice of $(X,\mu)$ and $\omega$, what choices of $g$ correspond to a Friedrichs-Lee Hamiltonian whose coupling measure $\kappa_g$ yields our desired dynamics?}

As an example, let us analyse the following problem: let us construct a Friedrichs-Lee operator such that the square modulus of $x(t)$, defined as in Eq.~\eqref{psiev}, decays with a purely exponential law. By Eq.~\eqref{f0t}, an exponential law is obtained if $\kappa_g$ is, up to a multiplicative constant, the Lebesgue measure on the whole real line. Indeed, if $\mathrm{d}\kappa_g(\lambda)=\frac{\beta}{2\pi}\mathrm{d}\lambda$, for some $\beta >0$, then by Eq.~\eqref{self} one readily obtains~\eqref{eq:flatSigma}, namely, 
\begin{equation}
	\Sigma_g(z)=\begin{cases}
	+\frac{i\beta}{2},&\operatorname{Im}z>0;\\
	-\frac{i\beta}{2},&\operatorname{Im}z<0,
	\end{cases}
\end{equation}
and hence, by Eq.~\eqref{f0t}, for $t>0$
\begin{equation}
	x(t)=e^{-(\beta/2+i\e)t},
\end{equation}
implying $|x(t)|^2=e^{-\beta t}$, for any value of $\e$. A Friedrichs-Lee model on a measure space $(X,\mu)$, with dispersion $\omega$ and form factor $g$, will thus yield an exponential decay if and only if
\begin{equation}
	\frac{\beta}{2\pi}(\lambda-\lambda_0)=\int_{\omega^{-1}\left([\lambda_0,\lambda]\right)}|g(k)|^2\,\mathrm{d}\mu(k),\quad \forall \lambda_0, \lambda \in \R,\; \lambda\geq\lambda_0.
	\label{inverse}
\end{equation}
Let us find some form factors $g$ satisfying Eq.~\eqref{inverse}, with space $X=\R^d$ and $\mu$ as the Lebesgue measure on $\R^d$, for different choices of the dispersion relation $\omega$.
\begin{example}
Consider a dispersion relation $\omega$ depending only on the projection of the momentum in some direction; without loss of generality, we fix a reference frame such that
\begin{equation}
	\omega(k_1,\dots,k_d)=w(k_1),
\end{equation}
with $w:\R\rightarrow\R$ being a differentiable and strictly increasing function. The latter hypothesis ensures
\begin{equation}
	\omega^{-1}\left([\lambda_0,\lambda]\right)=\left[w^{-1}(\lambda_0),w^{-1}(\lambda)\right]\times\R^{d-1},
\end{equation}
and hence Eq.~\eqref{inverse} becomes
\begin{equation}
\frac{\beta}{2\pi}(\lambda-\lambda_0)=\int_{w^{-1}(\lambda_0)}^{w^{-1}(\lambda)}f(k_1)\;\mathrm{d}k_1,
\label{inverse2}
\end{equation}
with
\begin{equation}
	f(k_1)=\int_{\R^{d-1}}|g(k_1,k_2,\dots,k_d)|^2\,\mathrm{d}k_2\dots\mathrm{d}k_d.
\end{equation}
Eq.~\eqref{inverse2} is satisfied iff $f(k_1)=\frac{\beta}{2\pi}w'(k_1)$, hence
\begin{equation}
	g(k_1,k_2,\dots,k_d)=e^{i\phi (k_1)}\sqrt{\frac{\beta}{2\pi}w'(k_1)}\,h(k_2,\dots,k_d),
\end{equation}
with $\phi :\R \to \R$ being an arbitrary phase term, and $h:\mathbb{R}^{d-1}\rightarrow\mathbb{C}$ a function satisfying
\begin{equation}
\int_{\R^{d-1}}|h(k_2,\dots,k_d)|^2 \, \mathrm{d}k_2\dots\mathrm{d}k_d=1.
\end{equation}
This result can be readily generalised to the case in which $w$ is piecewise monotonically increasing or decreasing: in this case the form factor is
\begin{equation}
g(k_1,k_2,\dots,k_d)=e^{i\phi (k_1)}\sqrt{\frac{\beta}{2\pi}|w'(k_1)|}\:h(k_2,\dots,k_d).
\end{equation}
\end{example}

\begin{example}
As a second example, consider a dispersion relation $\omega$ depending only on the modulus of the momentum. For any $k \in \R^d$ we can write $k=rn$, where $r =|k|$ and $n \in \mathbb{S}^1$, with $\mathbb{S}^1$ being the unit sphere. We assume that
\begin{equation}
	\omega(k)=w(r),
\end{equation}
with $w:\R^+\rightarrow\R$ being a differentiable and strictly increasing function. Again we obtain an equation analogous to Eq.~\eqref{inverse2}:
\begin{equation}
	\frac{\beta}{2\pi}(\lambda-\lambda_0)=\int_{w^{-1}(\lambda_0)}^{w^{-1}(\lambda)}f(r)\,r^{d-1}\;\mathrm{d}r,
	\label{inverse3}
\end{equation}
where
\begin{equation}
	f(r)=\int_{S^1}|g(k)|^2\;\mathrm{d}S(n).
\end{equation}
Eq.~\eqref{inverse3} is satisfied when $f(r)=\frac{\beta}{2\pi}w'(r)/r^{d-1}$, and hence the  form factor is
\begin{equation}
	g(k)=e^{i\phi (r)}\sqrt{\frac{\beta}{2\pi}\frac{w'(r)}{r^{d-1}}}h(n),
\end{equation}
again with $\phi:\R \to \R$ being an arbitrary phase term and $h:\mathbb{S}^1\rightarrow\CC$ satisfying
\begin{equation}
\int_{S^1}|h(n)|^2\;\mathrm{d}S(n)=1.
\end{equation}
This procedure can be generalised for a function $w$ piecewise monotonically increasing or decreasing, in this case we  obtain the form  factor,
\begin{equation}
g(k)=e^{i\phi (r)}\sqrt{\frac{\beta}{2\pi}\frac{|w'(r)|}{r^{d-1}}}h(n).
\end{equation}
\end{example}

\section*{Acknowledgements}
This work is partially supported by Istituto Nazionale di Fisica Nucleare (INFN) through the project ``QUANTUM'', and by the Italian National Group of Mathematical Physics (GNFM-INdAM).

\end{document}